\newcommand{\wloss}{\text{w.l.o.g.\ }}
\newcommand{\etal}{\text{et al.\ }}
\newcommand{\eps}{\varepsilon}
\newcommand{\mbar}{\bar{m}}
\newcommand{\pjA}{p_j^A}
\newcommand{\plA}{p_l^A}
\newcommand{\pfA}{p_f^A}
\DeclareMathOperator*{\ALG}{ALG}
\DeclareMathOperator*{\OPT}{OPT}
\title{Scheduling with Testing on Multiple Identical Parallel Machines\thanks{Funded by the Deutsche Forschungsgemeinschaft (DFG, German Research Foundation) – 277991500/GRK2201, and by the European Research Council, Grant Agreement No. 691672.}}
\titlerunning{Scheduling with Testing on Identical Machines}
\author{Susanne Albers\inst{1}\and Alexander Eckl\inst{1,2,}\thanks{Corresponding author, eMail: alexander.eckl@tum.de}}
\authorrunning{S. Albers and A. Eckl}
\institute{Department of Informatics, Technical University of Munich,\\ Boltzmannstr. 3, 85748 Garching, Germany\\ \email{albers@in.tum.de, alexander.eckl@tum.de}\\
	\and
	Advanced Optimization in a Networked Economy, Technical University of Munich,\\ Arcisstra\ss e 21, 80333 Munich, Germany}
\begin{document}
\maketitle
\begin{abstract}
	Scheduling with testing is a recent online problem within the framework of explorable uncertainty motivated by environments where some preliminary action can influence the duration of a task. Jobs have an unknown processing time that can be explored by running a test. Alternatively, jobs can be executed for the duration of a given upper limit. We consider this problem within the setting of multiple identical parallel machines and present competitive deterministic algorithms and lower bounds for the objective of minimizing the makespan of the schedule. 
	In the non-preemptive setting, we present the SBS algorithm whose competitive ratio approaches $3.1016$ if the number of machines becomes large. We compare this result with a simple greedy strategy and a lower bound which approaches $2$. In the case of uniform testing times, we can improve the SBS algorithm to be $3$-competitive. For the preemptive case we provide a $2$-competitive algorithm and a tight lower bound which approaches the same value.
	\keywords{Online Scheduling \and Identical Parallel Machines \and Explorable Uncertainty \and Makespan Minimization \and Competitive Analysis}
\end{abstract}

\section{Introduction}
\label{sec:introduction}

One of the most fundamental problems in online scheduling is makespan minimization on multiple parallel machines. An online sequence of $n$ jobs with processing times $p_j$ has to be assigned to $m$ identical machines. The objective is to minimize the makespan of the schedule, i.e.\ the maximum load on any machine. In 1966, Graham~\cite{Graham1966} showed that the List Scheduling algorithm, which assigns every job to the currently least loaded machine, is $(2-\frac{1}{m})$-competitive. Since then the upper bound has been improved multiple times, most recently to $1.9201$ by Fleischer and Wahl \cite{FleischerWahl2000}. At the same time, the lower bound has also been the focus of a lot of research, the current best result is $1.88$ by Rudin~\cite{Rudin2001}.

We consider this classical problem in the framework of \emph{explorable uncertainty}, where part the input is initially unknown to the algorithm and can be explored by investing resources which are added as costs to the objective function. Let $n$ jobs be given. Every job $j$ has a processing time $p_j$ and an upper bound $u_j$. It holds $0 \le p_j \le u_j$ for all $j$. Each job also has a testing time $t_j \ge 0$. A job can be executed on one of $m$ identical machines in one of two modes: It can either be run untested, which takes time $u_j$, or be tested and then executed, which takes a total time of $t_j + p_j$. The number of jobs $n$, as well as all testing times $t_j$ and upper bounds $u_j$ are known to the algorithm in the beginning. In particular, an algorithm can sort/order the jobs in a convenient way based on these parameters. The processing time $p_j$ for job $j$ is revealed once the test $t_j$ is completed. This \emph{scheduling with testing} setting has been recently studied by Dürr \etal \cite{DuerrEtAl2018}, and Albers and Eckl \cite{AlbersEckl2020} on a single machine.

We differentiate between \emph{preemptive} and \emph{non-preemptive} settings: If preemption is allowed, a job may be interrupted at any time, and then continued later on a possibly different machine. No two machines may work on the same job at the same time. In case a job is tested, any section of the test must be scheduled earlier than any section of the actual job processing. In the non-preemptive setting, a job assigned to a machine has be fully scheduled without interruption on this machine, independent of whether it is tested or not. We also introduce the notion of \emph{test-preemptive} scheduling, where a job can only be interrupted right after its test is completed.

Scheduling with testing is well-motivated by real world settings where a preliminary evaluation or operation can be executed to improve the duration or difficulty of a task. Examples for the case of multiple machines include e.g.\ a manufacturing plan where a number of jobs with uncertain length have to be assigned to multiple workers, or a distributed computing setting where tasks with unknown parameters have to be allocated to remote computing nodes by a central scheduler. Several examples for applicable settings for scheduling with testing can also be found in \cite{AlbersEckl2020,DuerrEtAl2018}.

In summary, we study the classical problem of makespan minimization on identical parallel machines in the framework of explorable uncertainty. We use competitive analysis to compare the value of an algorithm with an optimal off\-line solution. The setting closely relates to online machine scheduling problems studied previously in the literature. We investigate deterministic algorithms and lower bounds for the preemptive and non-preemptive variations of this problem.

\subsection{Related Work}

Scheduling with testing describes the setting where jobs with uncertain processing times have to be scheduled tested or untested on a given number of machines. The problem has been first studied by Dürr \etal \cite{DuerrEtAl2018,DuerrEtAl2020} for the special case of scheduling jobs on a single machine with uniform testing times $t_j \equiv 1$.
For the objective of minimizing the sum of completion times, they give a lower bound of 1.8546 and an upper bound of 2 in the deterministic setting. In the randomized setting, they present a lower bound of 1.6257 and a 1.7453-competitive algorithm. They also provide several upper bounds closer to the best possible ratio of 1.8546 for special case instances. Tight algorithms for the objective of minimizing the makespan are given for both the deterministic and randomized setting. 
More recently, Albers and Eckl \cite{AlbersEckl2020} considered the one machine case with testing times $t_j \ge 0$, presenting generalized algorithms for both objectives. In this paper, we consider scheduling with testing on identical parallel machines, a natural generalization of the previously studied one machine case.

\emph{Makespan minimization} in online scheduling with identical machines has been studied extensively in the past decades, ever since Graham \cite{Graham1966} established his $(2-\frac{1}{m})$-competitive List Scheduling algorithm in 1966. In the deterministic setting, a series of publications improved Graham's result to competitive ratios of $2-\frac{1}{m}-\eps_m$ \cite{GalambosWoeginger1993} where $\eps_m \to 0$ for large $m$, $1.985$ \cite{BartalEtAl1992}, $1.945$ \cite{KargerEtAl1996}, and $1.923$~\cite{Albers1999}, before Fleischer and Wahl \cite{FleischerWahl2000} presented the current best result of $1.9201$. In terms of the deterministic lower bound for general $m$, research has been just as fruitful. The bound was improved from $1.707$ \cite{FaigleEtAl1989}, to $1.837$ \cite{BartalEtAl1994}, and $1.852$ \cite{Albers1999}. The best currently known bound of $1.88$ is due to Rudin \cite{Rudin2001}. For the randomized variant, the lower bound has a current value of $\frac{e}{e-1} \approx 1.582$ \cite{ChenEtAl1994,Sgall1997}, while the upper bound is $1.916$ \cite{Albers2002}. For the deterministic preemptive setting, Chen \etal \cite{ChenEtAl1994b} provide a tight bound of $\frac{e}{e-1}$ for large values of $m$.

More recently, various extension of this basic case have emerged. In \emph{re\-source aug\-men\-ta\-tion} settings the algorithm receives some extra resources like machines with higher speed \cite{KalyanasundaramPruhs2000}, parallel schedules \cite{KellererEtAl1997,AlbersHellwig2017}, or a reordering buffer \cite{KellererEtAl1997, EnglertEtAl2008}. 
In a related setting, the algorithm might be allowed to migrate jobs \cite{SandersEtAl2009}.
A variation that is closely related to our setting is \emph{semi-online scheduling}, where some additional piece of information is available to the online algorithm in advance. Possible pieces of information include for example the sum of all processing times \cite{KellererEtAl1997,AlbersHellwig2012,KellererEtAl2015}, the value of the optimum \cite{AzarRegev2001}, or information about the job order \cite{Graham1969}. Refer also to the survey by Epstein \cite{Epstein2018} for an overview of makespan minimization in semi-online scheduling.

Scheduling with testing is directly related to \emph{explorable uncertainty}, a research area concerned with obtaining additional information of unknown parameters through queries with a given cost. Kahan \cite{Kahan1991} pioneered this line of research in 1991 by studying approximation guarantees for the number of queries necessary to obtain the maximum and median value of a set of uncertain elements. Following this, a variety of problems have been studied in this setting, for example finding the median or $k$-smallest value \cite{FederEtAl2003,Khanna2001,GuptaEtAl2011}, geometric tasks \cite{BruceEtAl2005}, caching \cite{Olston2000}, as well as combinatorial problems like minimum spanning tree \cite{ErlebachEtAl2008,Megow2017}, shortest path \cite{FederEtAl2007}, and knapsack \cite{GoerigkEtAl2015}. We refer to the survey by Erlebach and Hoffmann \cite{ErlebachHoffmann2015} for an overview. In the scheduling with testing model, the cost of the queries is added to the objective function. Similar settings are considered for example in Weitzman's pandora's box problem \cite{Weitzman1979}, or in the recent 'price of information' model by Singla~\cite{Singla2018}.

\subsection{Contribution}

In this paper we provide the first results for makespan minimization on multiple machines with testing. We differentiate between general tests $t_j \ge 0$ and uniform tests $t_j = 1$, and consider non-preemptive as well as preemptive environments. In Table \ref{tab:results}, we illustrate our results for these cases. The parameter $m$ corresponds to the number of machines in the instance.
\begin{table}[htb]
	\centering
	\caption{Overview of results}
	\label{tab:results}
	\renewcommand{\arraystretch}{1.2}
	\begin{tabulary}{\textwidth}{C | C | C | C}
		\textbf{Setting} 	& \textbf{General tests}								& \textbf{Uniform tests}						& \textbf{Lower bound}							\\ \midrule
		Non-preemptive		& $c(m) \xrightarrow[m \to \infty]{} 3.1016$			& $c_1(m) \xrightarrow[m \to \infty]{} 3$		& $\max(\varphi, 2-\frac{1}{m})$				\\
		Preemptive			& $2$													& $2$											& $\max(\varphi, 2-\frac{2}{m}+\frac{1}{m^2})$	\\
	\end{tabulary}
\end{table}

In the non-preemptive setting, we present our main algorithm with competitive ratio $c(m)$, which we refer to as the \emph{SBS algorithm}. The function $c(m)$ is increasing in $m$ and has a value of approximately $3.1016$ for $m \to \infty$. For uniform tests, we can improve the algorithm to a competitive ratio of $c_1(m)$, which approaches $3$ for large values of $m$. Additionally, we analyze a simple Greedy algorithm for general tests with a competitive ratio of $\varphi (2-\frac{1}{m})$, where $\varphi \approx 1.6180$ is the golden ratio. We also provide a lower bound with value $\max(\varphi, 2-\frac{1}{m})$. The values of $c(m)$, $c_1(m)$, the Greedy algorithm and the lower bound are summarized in Table \ref{tab:non-p_results}. For all values of $m>1$ the SBS algorithm has better ratios compared to Greedy. At the same time, the uniform version of the algorithm improves these results further. Though our algorithms work for any number of machines $m$, they all achieve the same ratio for $m=1$ as was already proven in \cite{DuerrEtAl2018} and \cite{AlbersEckl2020} for uniform and general tests, respectively.

If the scheduler is allowed to use preemption, we obtain a $2$-approximation for both general and uniform tests. The result holds even in the more restrictive test-preemptive setting. The corresponding lower bound of $\max(\varphi, 2-\frac{2}{m}+\frac{1}{m^2})$ is tight when the number of machines becomes large.

\begin{table}[b]
	\centering
	\caption{Results in the non-preemptive setting for selected values of $m$}
	\label{tab:non-p_results}
	\begin{tabulary}{\textwidth}{C | C | C | C | C | C | C | C | C}
								& \textbf{1}& \textbf{2}& \textbf{3}& \textbf{4}& \textbf{5}& \textbf{10}	& \textbf{100}	&\textbf{$\infty$}	\\ \midrule
		Greedy					& $1.6180$	& $2.4271$	& $2.6967$	& $2.8316$	& $2.9125$	& $3.0743$		& $3.2199$		& $3.2361$			\\
		SBS						& $1.6180$	& $2.3806$	& $2.6235$	& $2.7439$	& $2.8158$	& $2.9591$		& $3.0874$		& $3.1016$			\\
		Uniform-SBS				& $1.6180$	& $2.3112$	& $2.5412$	& $2.6560$	& $2.7248$	& $2.8625$		& $2.9862$		& $3$				\\
		Lower Bound				& $1.6180$	& $1.6180$	& $1.6667$	& $1.75$	& $1.8$		& $1.9$			& $1.99$		& $2$
	\end{tabulary}
\end{table}

We utilize various methods for our algorithms and lower bounds. The Greedy algorithm we present is a variation of the well-known List Scheduling algorithm introduced by Graham \cite{Graham1966}. For the more involved SBS algorithm and its uniform version we employ testing rules for jobs based on the ratio between their upper bound and testing time similar to \cite{AlbersEckl2020}. We additionally divide the schedule into phases based on these ratios, therefore sorting the jobs by the given parameters to guarantee competitiveness. In the preemptive setting, we divide the schedule into two independent phases, testing and execution, and use an offline algorithm for makespan minimization to solve each instance separately. Lastly, the lower bounds we provide are loosely based on a common construction for the classical makespan minimization setting on multiple machines, where a large number of small jobs is followed by a single larger job.

The rest of the paper is structured in the following way: We start by giving some general definitions needed for later sections. In Section \ref{sec:non_preemptive} we then first prove the competitive ratio of Greedy and the lower bound, before describing the main algorithm for the general case. At the end of the section, we then build a special version of the algorithm for the uniform case. In Section \ref{sec:preemptive} we consider the preemptive setting and give an algorithm as well as a tight lower bound. We conclude the paper by describing some open problems.

\subsection{Preliminary Definitions}

We use the following notations throughout the document: For a job $j \in [n]$, the \emph{optimal offline running time} of $j$, i.e.\ the time needed by the optimum to schedule $j$ on a machine, is denoted as $\rho_j := \min(t_j + p_j,u_j)$, while the \emph{algorithmic running time} of $j$, i.e.\ the time needed for an algorithm to run $j$ on a machine, is given by
\begin{equation}
\label{eq:alg_proc_time}
\pjA := \begin{cases}
t_j + p_j 	& \text{if } j \text{ is tested,}\\
u_j 		& \text{if } j \text{ is not tested.}
\end{cases}
\end{equation}

It is clear that $\rho_j \le \pjA$ for any job $j$. Additionally, it holds that $p_j \le \rho_j$, since the processing times $p_j$ are upper bounded by $u_j$.

At times, we may use the definition of the \emph{minimal running time} of job~$j$, which is given by $\tau_j := \min(t_j,u_j)$.

It is clear that any job must fulfill $\tau_j \le \rho_j$. In total, we get the following estimation for the different running times:
\begin{equation}
\label{eq:running_time_estimation}
\tau_j \le \rho_j \le \pjA, \qquad \forall j \in [n]
\end{equation}

Since an algorithm does not know the values $p_j$, the testing decisions for the jobs are non-trivial. A partial goal for any competitive algorithm is to define a testing scheme such that the algorithmic running times are not too large compared to the optimal offline running times. We provide the following result which was used previously in \cite{AlbersEckl2020} and is based on Theorem 14 of \cite{DuerrEtAl2018}. The given testing scheme based on the ratio $r_j = u_j/t_j$ between upper bound and testing time is used multiple times within this paper.

\begin{proposition}
	\label{prop:claims_case_dist}
	Let job $j$ be tested iff $r_j \ge \alpha$ for some $\alpha \ge 1$. Then:
	\begin{enumerate}[label=\emph{(\alph*)}]
		\item $\forall j \in [n]$ tested: $\pjA \le \left(1+\frac{1}{\alpha}\right) \rho_j$
		\item $\forall j \in [n]$ not tested: $\pjA \le \alpha \rho_j$
	\end{enumerate}
\end{proposition}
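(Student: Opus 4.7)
The plan is a direct case analysis on the two possible values of the minimum in $\rho_j = \min(t_j + p_j, u_j)$, carried out separately for the tested and untested cases. The only facts I will use are the defining hypothesis ($j$ tested iff $u_j/t_j \ge \alpha$), the trivial inequality $p_j \le u_j$, and $\alpha \ge 1$.

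For part \emph{(a)}, let $j$ be tested, so $\pjA = t_j + p_j$ and $t_j \le u_j/\alpha$. If $\rho_j = t_j + p_j$, the claim $\pjA \le (1+1/\alpha)\rho_j$ is immediate because $\pjA = \rho_j$ and $1+1/\alpha \ge 1$. Otherwise $\rho_j = u_j$, which in particular forces $u_j \le t_j + p_j$ but, more importantly, lets me bound $t_j \le u_j/\alpha = \rho_j/\alpha$ and $p_j \le u_j = \rho_j$. Adding these two estimates gives $\pjA = t_j + p_j \le \rho_j/\alpha + \rho_j = (1+1/\alpha)\rho_j$, which is the bound claimed.

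For part \emph{(b)}, let $j$ be untested, so $\pjA = u_j$ and $u_j < \alpha t_j$. If $\rho_j = u_j$, then $\pjA = \rho_j \le \alpha \rho_j$ (using $\alpha \ge 1$). Otherwise $\rho_j = t_j + p_j$, and I use the defining inequality to write $\pjA = u_j < \alpha t_j \le \alpha (t_j + p_j) = \alpha \rho_j$, which yields the claim.

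The argument has no real obstacle; the only thing to be slightly careful about is distinguishing the two cases of the minimum correctly and making sure the testing condition $u_j/t_j \ge \alpha$ is applied to turn a bound involving $u_j$ into one involving $t_j$ (or vice versa), exactly as done above.
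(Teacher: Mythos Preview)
Your proof is correct. The paper itself does not give a proof of this proposition at all; it simply cites prior work (Theorem~14 of D\"urr et~al.\ and the earlier single-machine paper by Albers and Eckl) where the result was established. Your direct case split on whether $\rho_j = t_j + p_j$ or $\rho_j = u_j$, combined with the testing threshold to convert between $t_j$ and $u_j$, is exactly the standard elementary argument behind those references, so there is nothing to add.
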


As a direct consequence of Proposition \ref{prop:claims_case_dist}, an optimal testing scheme for a single job is given by setting the threshold $\alpha$ to the golden ratio $\varphi \approx 1.6180$ \cite{DuerrEtAl2018}.

\section{Non-preemptive Setting}
\label{sec:non_preemptive}

In this section we assume that preemption is not allowed. Any job has to be assigned to one of $m$ available machines. Since we only consider makespan minimization, we may assume that there is no idle time on the machines and the actual ordering of the executions on a machine does not influence the outcome of the objective. It is therefore sufficient to only consider the assignment of the jobs to the machines.

\subsection{Lower Bound and Greedy Algorithm}

We first prove a straightforward lower bound and extend the simple List Scheduling algorithm from the classical setting to our problem.

For the lower bound we choose negligibly small testing times coupled with very large upper bounds. This forces the algorithm to test all jobs and thus having to decide on a machine for a given job while having no information about its real execution time.

\begin{theorem}
\label{thm:simple_lower_bound}
	No online algorithm is better than $(2-\frac{1}{m})$-competitive for the problem of makespan minimization on $m$ identical machines with testing, even if all testing times are equal to $1$.
\end{theorem}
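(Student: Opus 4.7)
The plan is to make the algorithm effectively blind by taking $n = m(m-1) + 1$ jobs with $t_j = 1$ and $u_j = M$ for a very large $M$. Any algorithm whose competitive ratio is bounded independently of $M$ must test every job, since otherwise running a job untested would contribute $M$ to its makespan while $\OPT$ stays bounded. The adversary's rule is simple: reveal $p_j = 0$ at each test completion, except for the very last test to complete (call this job $j^*$), where it sets $p_{j^*} = m - 1$.

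The heart of the argument will be to show that the load on the machine $i^*$ hosting $j^*$ is at least $2m - 1$. Since no tests occur anywhere after $j^*$, the job $j^*$ is necessarily the last job scheduled on $i^*$ chronologically. Writing $K$ for the number of jobs assigned to $i^*$, the preceding $K - 1$ jobs each take exactly one time unit (their $p_j = 0$), so $j^*$'s test completes at time $c_n = K$. Every other machine must also have finished all of its unit-length tests by time $c_n$, so every other machine carries at most $K$ jobs; summing over machines gives $mK \geq n = m(m-1) + 1$, which forces the integer $K$ to satisfy $K \geq m$. Non-preemption then requires $j^*$'s execution to continue for another $m - 1$ units on $i^*$, making the load on $i^*$ at least $K + m - 1 \geq 2m - 1$.

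For the optimum, $\OPT$ also tests every job, isolates the unique job of effective length $m$ on a machine of its own, and distributes the remaining $m(m-1)$ jobs of length $1$ as $m$ per machine on the other $m-1$ machines, achieving makespan exactly $m$. Dividing gives the claimed ratio $(2m-1)/m = 2 - 1/m$. The only point that needs care is the consistency of the adversary's adaptive rule: up to $j^*$'s test completion the algorithm observes only revelations $p_j = 0$, so its decisions coincide with those it would make on the all-zero instance; after $c_n$ no testing decisions remain and non-preemption forces the described continuation of $j^*$. Any idle time the algorithm chooses to insert only increases $c_n$, so the bound is robust to all adaptive strategies. The main (mild) obstacle is simply verifying this consistency carefully; the arithmetic itself is straightforward once the pigeonhole step is in place.
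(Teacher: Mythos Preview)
Your approach is essentially the paper's: the same instance of $m(m-1)+1$ indistinguishable jobs with $t_j=1$, $u_j=M$, forcing all jobs to be tested and letting the adversary place the single long job ($p=m-1$) adaptively; $\OPT=m$ for the same reason. The paper's adversary rule differs only cosmetically---it sets $p_j=m-1$ the first time a job is placed on a machine already holding $m-1$ jobs, rather than on the last test to complete---and the pigeonhole/averaging step is the same.

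One small wrinkle in your write-up: the claim that ``every other machine carries at most $K$ jobs'' uses $c_n = K$, which assumes no idle time on $i^*$. With idle time you have $c_n > K$, and the inequality $K_i \le K$ need not hold, so your bound on the \emph{load} of $i^*$ can fail. The makespan bound survives anyway: $\ALG \ge c_n + (m-1)$, and since every machine $i$ satisfies $K_i \le f_i \le c_n$ (its $K_i$ unit-length tests finish no earlier than time $K_i$), we get $c_n \ge \max_i K_i \ge \lceil n/m \rceil = m$ directly, without comparing the $K_i$ to $K$. The paper's count-based trigger sidesteps this detour entirely, which is the only real advantage of its formulation over yours.
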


We note that $\varphi \approx 1.6180$ is always a lower bound for our problem (see \cite{DuerrEtAl2018}), which is relevant only for small values of $m \le 2$. The proof of Theorem \ref{thm:simple_lower_bound} is provided in Appendix \ref{app:sec:proofs}.

To prove a simple upper bound, we can generalize the List Scheduling algorithm to our problem variant as follows:

Consider the given jobs in any order. For a job $j$ to be scheduled next, test $j$ if and only if $u_j/t_j \ge \varphi$ and then execute it completely on the current least-loaded machine.

\begin{theorem}
\label{thm:greedy_algorithm}
	The extension of List Scheduling described above is $\varphi \, (2-\frac{1}{m})$-competitive for minimizing the makespan on $m$ identical machines with non-uniform testing, where $\varphi \approx 1.6180$ is the golden ratio. This analysis is tight.
\end{theorem}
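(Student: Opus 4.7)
The plan is to follow the classical List Scheduling template of Graham, plugging in Proposition~\ref{prop:claims_case_dist} with the threshold $\alpha = \varphi$ to bound each algorithmic running time by $\varphi\rho_j$. Since the golden ratio satisfies the identity $1 + 1/\varphi = \varphi$, the bound $\pjA \le \varphi\rho_j$ holds uniformly: Proposition~\ref{prop:claims_case_dist}(a) gives $\pjA \le (1+1/\varphi)\rho_j = \varphi\rho_j$ for tested jobs, and (b) gives $\pjA \le \varphi\rho_j$ for untested jobs. This single clean inequality is what makes the Graham-style argument go through.

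For the upper bound, I would let $j^*$ be the job that determines the algorithm's makespan $C_{\ALG}$. Since the algorithm assigns every job to the currently least-loaded machine, at the moment $j^*$ is dispatched each of the $m$ machines must already carry load at least $C_{\ALG} - \pjA_{j^*}$. Summing over all machines and adding $\pjA_{j^*}$ yields $m(C_{\ALG} - \pjA_{j^*}) + \pjA_{j^*} \le \sum_{j} \pjA$, so
\[
    C_{\ALG} \;\le\; \frac{1}{m}\sum_{j} \pjA + \left(1 - \tfrac{1}{m}\right) \pjA_{j^*}.
\]
Using $\pjA \le \varphi\rho_j$ together with the two standard lower bounds $\OPT \ge \frac{1}{m}\sum_j \rho_j$ and $\OPT \ge \rho_{j^*}$ gives $\frac{1}{m}\sum_j \pjA \le \varphi \OPT$ and $\pjA_{j^*} \le \varphi \OPT$, from which the ratio $\varphi(2 - 1/m)$ follows immediately.

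For tightness, I would adapt Graham's classical bad instance. Release $m(m-1)$ small jobs with $t_j=1$, $u_j=\varphi$, and actual processing time $p_j=\varphi$, followed by a single large job with $t_j=m$, $u_j=m\varphi$, and $p_j=m\varphi$. Both types satisfy $u_j/t_j = \varphi$, so the algorithm tests every job; the algorithmic running time of each small job equals $1+\varphi = \varphi^2$, and that of the large job equals $m+m\varphi = m\varphi^2$, while $\rho_j=\varphi$ and $\rho_{j^*}=m\varphi$ respectively. The algorithm distributes the small jobs evenly, leaving each machine at load $(m-1)\varphi^2$, and then places the big job on top of one of them for a makespan of $(2m-1)\varphi^2$. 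The optimum devotes one machine to the large job (load $m\varphi$) and balances the small jobs on the remaining $m-1$ machines ($m$ jobs of length $\varphi$ each, load $m\varphi$), achieving makespan $m\varphi$. The resulting ratio is $(2m-1)\varphi/m = \varphi(2-1/m)$, matching the upper bound.

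The analysis is essentially routine; the only subtle point is the tie-breaking in the testing rule, which I would resolve by declaring that $r_j \ge \varphi$ forces a test (consistent with the algorithm's definition) so that the borderline construction above is indeed executed as claimed, and by pointing out that the identity $1+1/\varphi = \varphi$ is what makes $\alpha = \varphi$ simultaneously optimal for both cases of Proposition~\ref{prop:claims_case_dist}.
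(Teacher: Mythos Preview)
Your proposal is correct and follows essentially the same approach as the paper: both arguments apply Proposition~\ref{prop:claims_case_dist} with $\alpha=\varphi$ to obtain $\pjA\le\varphi\rho_j$, feed this into the standard Graham List-Scheduling inequality together with the two classical lower bounds $\OPT\ge\frac{1}{m}\sum_j\rho_j$ and $\OPT\ge\rho_{j^*}$, and use the identical tightness instance of $m(m-1)$ small jobs with $t_j=1,u_j=p_j=\varphi$ followed by one large job with $t_f=m,u_f=p_f=m\varphi$. Your explicit remark on the identity $1+1/\varphi=\varphi$ and on the tie-breaking at $r_j=\varphi$ is a welcome clarification but does not change the argument.
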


The proof structure is similar to the proof of List Scheduling and uses common lower bounds for makespan minimization. We again refer to Appendix \ref{app:sec:proofs} for all details.

\subsection{SBS Algorithm}
\label{subsec:main_algorithm}

In this section we provide a $3.1016$-competitive algorithm for the non-preemptive setting. It assigns jobs into three classes $S_1, B,$ and $S_2$ based on their ratios between upper bounds and testing times.

Let $[n]$ be the set of all jobs. We define a threshold function $T(m)$ for all $m$ and divide the jobs into disjoint sets $[n] = B\, \dot\cup\, S$, where $S$ will be further subdivided into $S_1$ and $S_2$. The set $B$ corresponds to jobs where the ratio $r_j = u_j/t_j$ between upper bound and testing time is large, while jobs in $S$ have a small ratio. We define
\begin{equation*}
	\begin{aligned}
		B := &\left\{j \in [n]: r_j \ge T(m)\right\},\\
		S := &[n] \setminus B.\\
	\end{aligned}
\end{equation*}
For the set $S$, we would like the algorithm to be able to distinguish jobs based on their optimal offline running time $\rho_j$. Of course, without testing the algorithm does not know these values, so we instead use the minimal running time $\tau_j$, which can be computed directly using offline input only, to divide the set $S$ further.

We define $S_1 \subset S$, such that $|S_1| = \min(m,|S|)$ and $\forall j_1 \in S_1, j_2 \in S \setminus S_1$: $\tau_{j_1} \ge \tau_{j_2}$.  In other words, $S_1$ is the set of at most $m$ jobs in $S$ with the largest minimal running times. If this definition of $S_1$ is not unique, we may choose any such set. We set $S_2 := S \setminus S_1$. It follows that if $|S| \le m$, then $S_2 = \emptyset$.

The idea behind dividing $S$ into two sets is to identify the $m$ largest jobs according to minimal running time and schedule them first, each on a separate machine. This allows us to lower bound the runtime of the remaining jobs later in the schedule.

In Algorithm \ref{alg:SBS} we describe the SBS algorithm which solves the non-uniform case and works in three phases corresponding to the sets $S_1, B$ and $S_2$:

\begin{algorithm}[bht]
	$B \leftarrow \{j \in [n]: r_j \ge T(m)\}$\;	$S \leftarrow [n] \setminus B$\;
	$S_1 \leftarrow S' \subset S \text{ s.t.\ } |S'| = \min(m,|S|),\ \tau_{j_1} \ge \tau_{j_2}\ \forall j_1{\in}S', j_2{\in}S{\setminus}S'$\; $S_2 \leftarrow S \setminus S_1$\;
	\ForEach{$j\in S_1$}{
		\eIf{$r_j \ge \varphi$}
		{test and run $j$ on an empty machine\;}
		{run $j$ untested on an empty machine\;}
	}
	\ForEach{$j\in B$}{
		test and run $j$ on the current least-loaded machine\;
	}
	\ForEach{$j\in S_2$}{
		run $j$ untested on the current least-loaded machine\;
	}
	\caption{SBS algorithm}
	\label{alg:SBS}
\end{algorithm}

In order to have a non-trivial testing decision for jobs in $S_1$, it makes sense to require that $T(m) \ge \varphi$ for all $m$. More specifically, we will define the threshold function $T(m)$ in the non-uniform setting as follows:
\begin{equation*}
T(m) = \frac{(3 + \sqrt{5})m - 2 + \sqrt{(38 + 6\sqrt{5})m^2 - 4(11+\sqrt{5})m + 12}}{6m-2}
\end{equation*}

\begin{theorem}
\label{thm:main_alg}
	Let $T(m)$ be a parameter function of $m$ defined as above. The SBS algorithm is $T(m) \left(\frac{3}{2} - \frac{1}{2m} \right)$-competitive for minimizing the makespan on $m$ identical machines with non-uniform testing.
\end{theorem}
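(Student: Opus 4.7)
The plan is to analyze the job $\jbar$ that completes last under the algorithm, so that $C^{\ALG}$ equals the start time of $\jbar$ plus $p_{\jbar}^A$, and split into three cases according to whether $\jbar \in S_1$, $\jbar \in B$, or $\jbar \in S_2$. I would first establish uniform per-job bounds via Proposition~\ref{prop:claims_case_dist}: jobs in $S_1$ satisfy $p_j^A \leq \varphi \rho_j$ (combining the tested and untested subcases by the identity $1 + 1/\varphi = \varphi$); jobs in $B$, all tested with threshold $T(m) \geq \varphi$, satisfy $p_j^A \leq (1 + 1/T(m))\rho_j \leq \varphi \rho_j$; and jobs in $S_2$, all untested with $r_j < T(m)$, satisfy $p_j^A = u_j \leq T(m)\tau_j \leq T(m)\rho_j$. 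The standard lower bounds $\OPT \geq \max_j \rho_j$ and $\OPT \geq \tfrac{1}{m}\sum_j \rho_j$ will be used throughout.

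If $\jbar \in S_1$, then $\jbar$ starts at time $0$ on an empty machine in the first phase, so $C^{\ALG} = p_{\jbar}^A \leq \varphi \rho_{\jbar} \leq \varphi \OPT$; this is bounded by $T(m)(\tfrac{3}{2} - \tfrac{1}{2m})\OPT$ since $T(m) \geq \varphi$ and $\tfrac{3}{2} - \tfrac{1}{2m} \geq 1$ for all $m \geq 1$. If $\jbar \in B$, then only jobs in $S_1 \cup B$ precede $\jbar$, and $\jbar$ is placed on the currently least-loaded machine; the standard list-scheduling estimate together with the uniform $\varphi$-bound on $S_1 \cup B$ then gives
\[
C^{\ALG} \leq \frac{1}{m}\sum_{j \in S_1 \cup B} p_j^A + \Bigl(1 - \frac{1}{m}\Bigr) p_{\jbar}^A \leq \Bigl[\varphi + \Bigl(1 - \frac{1}{m}\Bigr)\Bigl(1 + \frac{1}{T(m)}\Bigr)\Bigr]\OPT.
\]

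If $\jbar \in S_2$, then $S_2 \neq \emptyset$ forces $|S_1| = m$, so every machine carries exactly one job from $S_1$ after the first phase. By construction every $j \in S_1$ satisfies $\tau_j \geq \tau_{\jbar}$, so each of the $m+1$ jobs in $S_1 \cup \{\jbar\}$ has $\rho_j \geq \tau_j \geq \tau_{\jbar}$; a pigeonhole argument then forces the optimum to place two such jobs on one machine, yielding $\OPT \geq 2\tau_{\jbar}$ and hence $p_{\jbar}^A \leq T(m)\tau_{\jbar} \leq \tfrac{T(m)}{2}\OPT$. The uniform bound $p_j^A \leq T(m)\rho_j$ across all classes also gives $\sum_j p_j^A \leq T(m)\, m\, \OPT$, and applying the list-scheduling estimate yields
\[
C^{\ALG} \leq \frac{1}{m}\sum_j p_j^A + \Bigl(1 - \frac{1}{m}\Bigr) p_{\jbar}^A \leq T(m)\OPT + \Bigl(1 - \frac{1}{m}\Bigr)\frac{T(m)}{2}\OPT = T(m)\Bigl(\frac{3}{2} - \frac{1}{2m}\Bigr)\OPT.
\]

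The main obstacle is the $\jbar \in B$ case, where the derived estimate $\varphi + (1 - 1/m)(1 + 1/T(m))$ has to be shown $\leq T(m)(\tfrac{3}{2} - \tfrac{1}{2m})$; this is exactly what pins down the particular closed form for $T(m)$. After multiplying by $2m\,T(m)$ and using the identity $\varphi + 1 = \varphi^2$, the required inequality reduces to $(3m-1)T(m)^2 - 2(m\varphi^2 - 1)T(m) - 2(m-1) \geq 0$, and the positive root of the associated quadratic, simplified via $2\varphi^2 = 3 + \sqrt{5}$, matches exactly the expression for $T(m)$ stated before the theorem.
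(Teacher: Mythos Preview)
Your proof is correct and follows essentially the same route as the paper: the same three-way case split on the class of the last-finishing job, the same per-class bounds from Proposition~\ref{prop:claims_case_dist}, the same pigeonhole argument on $S_1\cup\{\jbar\}$ to get $\OPT\ge 2\tau_{\jbar}$ in the $S_2$ case, and the same balancing of the $B$- and $S_2$-case bounds to pin down $T(m)$. Your explicit reduction of the $B$-case inequality to the quadratic $(3m-1)T^2 - 2(m\varphi^2-1)T - 2(m-1)\ge 0$ and the matching of its positive root to the stated $T(m)$ is exactly the verification the paper leaves to the reader.
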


The function $T(m)$ is increasing for all $m \ge 1$ and fulfills $T(1) = \varphi$ as well as approximately $T(m) \rightarrow 2.0678$ for $m \to \infty$. The competitive ratio of the algorithm is explicitly given by
\begin{equation*}
c(m) = \frac{(3 + \sqrt{5})m - 2 + \sqrt{(38 + 6\sqrt{5})m^2 - 4(11+\sqrt{5})m + 12}}{4m}.
\end{equation*}
For this function we have $c(1) = \varphi$ as well as approximately $c(m) \rightarrow 3.1016$ if $m$ approaches infinity. Additionally, it holds that $c(m) < \varphi\left(2-\frac{1}{m}\right)$ for all $m > 1$.

\begin{proof}
	We assume \wloss that the job indices are sorted by non-increasing optimal offline running times $\rho_1 \ge \dots \ge \rho_n$. We denote the last job to finish in the schedule of the algorithm as $l$ and the minimum machine load before job $l$ as $t$. It follows that the value of the algorithm is $t + \plA$.
	
	The value of the optimum is at least as large as the average sum of the optimal offline running times, or
	\begin{equation}
	\label{eq:lower_bound_average}
	L := \frac{1}{m} \sum_{j \in [n]} \rho_j \le \OPT,
	\end{equation}
	since in any schedule at least one machine must have a load of at least this average. At the same time, we know that the optimum has to schedule every job on some machine:
	\begin{equation}
	\label{eq:lower_bound_max}
	\rho_j \le \OPT \quad \forall j \in [n]
	\end{equation}
	We also utilize another common lower bound in makespan minimization, which is the sum of the processing times of the $m$-th and $(m{+}1)$-th largest job. If there are at least $m+1$ jobs, then some machine has to schedule at least $2$ of these jobs:
	\begin{equation}
	\label{eq:lower_bound_m_m+1}
		\rho_m + \rho_{m+1} \le \OPT.
	\end{equation}
	Here, $\rho_j$ is defined as $0$ if the instance has less than $j$ jobs.
	
	We differentiate between jobs handled by the algorithm in different phases and bound the algorithmic running times against the optimal offline running times. We write $\pjA \le \alpha_j \rho_j$ and define different values for $\alpha_j$ depending on the set $j$ belongs to. It holds that
	\begin{equation}
	\label{eq:SBS_bounds}
		\alpha_j = 
		\begin{cases}
		\varphi 			&\text{if } j \in S_1,\\
		1 + \frac{1}{T(m)}	&\text{if } j \in B,\\
		T(m) 				&\text{if } j \in S_2,
		\end{cases}
	\end{equation}
	by Proposition~\ref{prop:claims_case_dist} and the testing strategy of the algorithm.
	
	The objective value of the algorithm depends on the set job $l$ belongs to, so we differentiate between three cases. The following proposition upper bounds the algorithmic value $\ALG = t + \plA$ for each of these cases:
	\begin{proposition}
	\label{prop:SBS_estimates}
		The value of the algorithm can be estimated as follows:
		\begin{equation*}
			\ALG \le
			\begin{cases}
				\varphi \OPT &\text{if } l \in S_1,\\
				\left( \varphi + \left(1 + \frac{1}{T(m)}\right) \left(1 - \frac{1}{m} \right) \right) \OPT &\text{if } l \in B,\\
				T(m) \left(\frac{3}{2} - \frac{1}{2m} \right) \OPT &\text{if } l \in S_2.
			\end{cases}
		\end{equation*}
	\end{proposition}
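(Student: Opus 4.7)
I expand $\ALG = t + p_l^A$ via the standard list-scheduling inequality $m\cdot t \le \sum_{j \text{ placed before } l} p_j^A$, so that
\[ \ALG \le \tfrac{1}{m}\sum_{j \text{ at or before } l} p_j^A + (1-1/m)\, p_l^A, \]
and then specialise to the phase in which $l$ is placed, using the coefficient bounds (\ref{eq:SBS_bounds}) together with (\ref{eq:lower_bound_average})--(\ref{eq:lower_bound_m_m+1}). For $l \in S_1$ the claim is immediate: $l$ occupies a private empty machine in Phase~1, and since $l$ finishes last no subsequent job may be added to $l$'s machine; thus $t = 0$ and $\ALG = p_l^A \le \varphi\,\rho_l \le \varphi\OPT$ by (\ref{eq:SBS_bounds}) and (\ref{eq:lower_bound_max}).

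For $l \in B$ only jobs of $S_1 \cup B$ precede $l$. The central observation is that $T(m) \ge \varphi$ together with the defining identity $1 + 1/\varphi = \varphi$ of the golden ratio gives $1 + 1/T(m) \le \varphi$, so for every $j \in S_1 \cup B$ the bound $p_j^A \le \varphi\,\rho_j$ holds by (\ref{eq:SBS_bounds}). This collapses
\[ \tfrac{1}{m}\sum_{j \in S_1 \cup B} p_j^A \;\le\; \tfrac{\varphi}{m}\sum_{j \in S_1 \cup B} \rho_j \;\le\; \varphi L \;\le\; \varphi \OPT \]
via (\ref{eq:lower_bound_average}), while the tail is $(1-1/m)\,p_l^A \le (1-1/m)(1+1/T(m))\OPT$ by (\ref{eq:SBS_bounds}) and (\ref{eq:lower_bound_max}), giving the stated bound.

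For $l \in S_2$ every job is at or before $l$, and the same coefficient-domination argument using $T(m)$ as a common upper bound on all three values of $\alpha_j$ yields $\tfrac{1}{m}\sum_{j\in [n]} p_j^A \le T(m)\,L \le T(m)\,\OPT$. The hard part will be the refined estimate $p_l^A \le T(m)\,\OPT/2$. My plan is this: since $l \in S_2$ forces $|S| \ge m+1$ and hence $|S_1| = m$, the set $S_1 \cup \{l\}$ contains $m+1$ distinct jobs, so by pigeonhole some pair of them shares a machine in $\OPT$; thus $\rho_i + \rho_j \le \OPT$ for some $i \ne j \in S_1 \cup \{l\}$, and consequently $\min_{j \in S_1 \cup \{l\}} \rho_j \le \OPT/2$. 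If this minimum is $\rho_l$ I conclude $p_l^A \le T(m)\,\rho_l \le T(m)\,\OPT/2$ by (\ref{eq:SBS_bounds}). Otherwise some $j^* \in S_1$ achieves the minimum, and the defining property of $S_1$ together with (\ref{eq:running_time_estimation}) yields $\tau_l \le \tau_{j^*} \le \rho_{j^*} \le \OPT/2$; a brief case split on whether $r_l < 1$ or $1 \le r_l < T(m)$ shows that the untested job $l$ satisfies the sharper inequality $p_l^A = u_l \le T(m)\,\tau_l$, so the bound again follows. Substituting into the opening inequality gives $\ALG \le T(m)\,(3/2 - 1/(2m))\,\OPT$. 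The crucial insight making this sub-case work is that $S_1$ is selected via $\tau$-ordering rather than $\rho$-ordering, so any $\rho$-small $S_1$-job supplied by pigeonhole automatically caps $\tau_l$ and, through the untested-job refinement, $p_l^A$ itself.
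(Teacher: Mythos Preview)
Your proof is correct and follows essentially the same approach as the paper. The only notable variation is in Case~3: the paper argues directly that $\tau_l \le \rho_j$ for every $j \in S_1 \cup \{l\}$ (since $\tau_l \le \tau_j \le \rho_j$ for $j \in S_1$ and $\tau_l \le \rho_l$), so $\tau_l \le \rho_{m+1}$ in the global $\rho$-ordering and hence $\tau_l \le (\rho_m + \rho_{m+1})/2 \le \OPT/2$ via~(\ref{eq:lower_bound_m_m+1}); it then runs the $\tau_l = u_l$ vs.\ $\tau_l = t_l$ split once. You instead pigeonhole on $S_1 \cup \{l\}$ and branch on whether the pair's smaller $\rho$-value is $\rho_l$ (handled directly by~(\ref{eq:SBS_bounds})) or some $\rho_{j^*}$ with $j^* \in S_1$ (handled via $\tau_l \le \tau_{j^*} \le \rho_{j^*}$ and then the $r_l$-split). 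The two routes are equivalent. One minor wording issue: in Case~3 it is not literally true that ``every job is at or before $l$'' (other $S_2$-jobs may be placed later), but your bound is unaffected since extending the sum to all of $[n]$ only overestimates.
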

	
	To prove this proposition, we utilize the lower bounds \eqref{eq:lower_bound_average}-\eqref{eq:lower_bound_m_m+1} and the estimates \eqref{eq:SBS_bounds} for the value of $\alpha_j$. A critical step lies in the estimation of $\plA$ for $l \in S_2$, where we are able to lower bound $\tau_l$ using the size of the $m$-th and $(m{+}1)$-th largest job because the algorithm already ran $m$ jobs from $S_1$ in the beginning of the schedule. We refer to the appendix for a detailed proof.
	
	It remains to take the maximum over all three cases and minimize the value in dependence of $T(m)$. The value in the case $l \in S_1$ is always less than the values given by the other cases, therefore we only want to minimize
	\begin{equation*}
		\max \left(\varphi + \left(1 + \frac{1}{T(m)}\right) \left(1 - \frac{1}{m} \right),\ T(m) \left(\frac{3}{2} - \frac{1}{2m} \right) \right).
	\end{equation*}
	
	The left side of the maximum is decreasing in $T(m)$, while the right side is increasing. The minimal maximum is therefore attained when both sides are equal. It can be easily verified that for the given definition of the threshold function $T(m)$ both sides of the maximum are equal for all values of $m \ge 1$.
	
	It follows that the final ratio can be estimated by $\frac{\ALG}{\OPT} \le T(m) \left(\frac{3}{2} - \frac{1}{2m} \right)$. \qed
\end{proof}

\subsection{An Improved Algorithm for the Uniform Case}

The previous section established an algorithm with a competitive ratio of approximately $3.1016$. We now present an algorithm with a better ratio in the case when $t_j = 1$ for all jobs. We define the threshold function $T_1(m)$ as follows:
\begin{equation*}
T_1(m) = \frac{2m-1 + \sqrt{16m^2-14m+3}}{3m-1}
\end{equation*}

The \emph{Uniform-SBS} algorithm works as follows: Sort the jobs by non-increasing $u_j$. Go through the sorted list of jobs and put the next job on the machine with the lowest current load. A job $j$ is tested if $u_j \ge T_1(m)$, otherwise it is run untested.

\begin{theorem}
\label{thm:main_alg_uniform}
	Uniform-SBS is a $T_1(m)(\frac{3}{2}-\frac{1}{2m})$-competitive algorithm for uniform instances.
\end{theorem}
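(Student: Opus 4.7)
The plan is to mirror the proof of Theorem~\ref{thm:main_alg} while exploiting two simplifications enabled by $t_j \equiv 1$: the ordering by $u_j$ coincides with the ordering by the testing ratio $r_j = u_j$, and this single sorting pass lets us avoid SBS's explicit three-phase decomposition. I WLOG index jobs so that $\rho_1 \ge \ldots \ge \rho_n$, let $l$ be the last-finishing job and $t$ the load of $l$'s machine at the moment $l$ is placed (so $\ALG = t + p_l^A$), and retain the three $\OPT$-lower-bounds $L := \tfrac{1}{m}\sum_j \rho_j \le \OPT$, $\rho_l \le \OPT$, and $\rho_m + \rho_{m+1} \le \OPT$. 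Proposition~\ref{prop:claims_case_dist} applied with threshold $\alpha = T_1(m)$ gives $p_j^A \le (1 + 1/T_1(m))\rho_j$ for tested jobs and $p_j^A \le T_1(m)\rho_j$ for untested jobs; since $T_1(m) \ge T_1(1) = \varphi$ one also has $T_1(m) \ge 1 + 1/T_1(m)$, so $p_j^A \le T_1(m)\rho_j$ holds uniformly.

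The key structural fact is that every job scheduled strictly before $l$ has $u_j \ge u_l$. Case~A, $l$ tested with $u_l \ge T_1(m)$: every predecessor of $l$ in scheduling order is also tested, so $mt \le \sum_{j \text{ before } l} p_j^A \le (1 + 1/T_1(m))\bigl(m\OPT - \rho_l\bigr)$, which together with $p_l^A \le (1 + 1/T_1(m))\rho_l$ and $\rho_l \le \OPT$ yields $\ALG \le (1 + 1/T_1(m))(2 - 1/m)\OPT$. Case~B, $l$ untested with $p_l^A = u_l$: the $m$ jobs first placed on empty machines each satisfy $u_j \ge u_l$, hence $\rho_j \ge \tau_j \ge \tau_l$; together with $l$ this gives $m+1$ jobs of $\rho$-value at least $\tau_l$, so $\rho_m + \rho_{m+1} \ge 2\tau_l$, and since the untested rule forces $u_l \le T_1(m)\tau_l$ (checked by splitting on whether $u_l \ge 1$ or $u_l < 1$), we obtain $u_l \le \tfrac{T_1(m)}{2}\OPT$. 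The uniform bound $p_j^A \le T_1(m)\rho_j$ then gives $t \le T_1(m)\OPT - T_1(m)\rho_l/m \le T_1(m)\OPT - u_l/m$ by $\tau_l \ge u_l/T_1(m)$, so $\ALG \le T_1(m)\OPT + u_l(1 - 1/m) \le T_1(m)(3/2 - 1/(2m))\OPT$.

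The closed form for $T_1(m)$ is obtained by balancing the two cases: setting $(1 + 1/T_1(m))(2 - 1/m) = T_1(m)(3/2 - 1/(2m))$ and solving the resulting quadratic in $T_1(m)$ yields exactly the formula in the theorem, with discriminant $16m^2 - 14m + 3$. The main obstacle is the Case~B argument, since the naive chain $p_l^A \le T_1(m)\rho_l \le T_1(m)\OPT$ would yield only the loose coefficient $(2 - 1/m)$; obtaining $(3/2 - 1/(2m))$ requires the $\rho_m + \rho_{m+1}$ inequality applied together with the sorted placement of the first $m$ jobs, plus the observation that $\tau_l \ge u_l/T_1(m)$ lets one absorb a $u_l/m$ term into $-T_1(m)\rho_l/m$. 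The degenerate cases where $l$ is itself among the first $m$ placements (so $t = 0$) or $n \le m$ (so $\rho_{m+1} = 0$) are easily seen to be dominated by the target ratio.
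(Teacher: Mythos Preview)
Your proposal is correct and follows essentially the same approach as the paper's proof: the same two-case split on whether $l$ is tested, the same use of the sorting by $u_j$ to ensure all predecessors of a tested $l$ are tested, and the same use of the $\rho_m + \rho_{m+1}$ bound via the first $m$ scheduled jobs when $l$ is untested. The only cosmetic difference is that in Case~B the paper subtracts $p_l^A/m = u_l/m$ directly from the average-load bound, whereas you take a small detour through $T_1(m)\rho_l/m$ and then invoke $\tau_l \ge u_l/T_1(m)$ to recover the same term; both chains arrive at $\ALG \le T_1(m)\OPT + u_l(1 - 1/m)$.
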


For uniform jobs with $t_j = 1$, sorting by non-increasing upper bound $u_j$ is consistent with sorting by non-increasing ratio $r_j$. Hence, Uniform-SBS is similar to the SBS algorithm reduced to the phases corresponding to the sets $B$ and $S$, where $S$ contains \emph{all} small jobs. The reason behind running the $m$ largest jobs of $S$ first in the SBS algorithm was to upper bound the remaining jobs in $S$. For uniform testing times, this bound can be achieved \emph{without} this special structure.

The function $T_1(m)$ is increasing for all $m \ge 1$ and fulfills $T_1(1) = \varphi$ as well as $T_1(m) \rightarrow 2$ for $m \to \infty$. Computing the competitive ratio explicitly yields
\begin{equation*}
	c_1(m) = \frac{2m-1 + \sqrt{16m^2-14m+3}}{2m}.
\end{equation*}
These values start from $c_1(1) = \varphi$ and approach $c_1(m) \rightarrow 3$ if $m \rightarrow \infty$. Additionally, it holds that $c_1(m) < c(m)$ for all $m > 1$. In other words, this special version of the algorithm is strictly better than the general SBS algorithm described in Section \ref{subsec:main_algorithm}. We defer the proof of Theorem \ref{thm:main_alg_uniform} to Appendix \ref{app:sec:proofs}.

\section{Results with Preemption}
\label{sec:preemptive}

In this section we assume that jobs can be preempted at any time during their execution. An interrupted job may be continued on a possibly different machine, but no two machines may work on the same job at the same time. Testing a job must be completely finished before any part of its execution can take place.

It makes sense to additionally consider the following stricter definition of preemption within scheduling with testing: Untested jobs must be run without interruption on a single machine. If a job is tested, its test must also be run without interruption on one machine. The execution after the test may then be run without interruption on a possibly different machine. We call this setting \emph{test-preemptive}, referring to the fact that the only place where we might preempt a job is exactly when its test is completed. From an application point of view, the test-preemptive setting is a natural extension of the non-preemptive setting, allowing the scheduler to reconsider the assignment of a job after receiving more information through the test.

Clearly, the difficulty of settings within scheduling with testing increases in the following order: preemptive, test-preemptive and non-preemptive. We now present the $2$-competitive \emph{Two Phases} algorithm for the test-preemptive setting, which can be applied directly to the ordinary preemptive case. Additionally, we construct a lower bound of $2-2/m+1/m^2$ for the ordinary preemptive case. This lower bound then also holds for test-preemption, and is therefore tight for both settings when the number of machines $m$ approaches infinity.

The Two Phases algorithm for the test-preemptive setting works as follows: Let OFF denote an optimal offline algorithm for makespan minimization on $m$ machines. In the first phase, the algorithm schedules all jobs for their minimal running time $\tau_j$ using the algorithm OFF. Herein, the algorithm tests all jobs except trivial jobs with $t_j > u_j$, where running the upper bound is optimal. In the second phase, all remaining jobs are already tested, hence the algorithm now knows all remaining processing times $p_j$. We then use the offline algorithm OFF again to schedule the remaining jobs optimally. Finally, the algorithm obliviously puts the second schedule on top of the first.

\begin{theorem}
\label{thm:alg_test_prmptv}
	The Two Phases algorithm is $2$-competitive for minimizing the makespan on $m$ machines with testing in the test-preemptive setting.
\end{theorem}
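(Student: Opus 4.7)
Let $\OPT$ denote the optimal makespan under test-preemption and fix an optimal test-preemptive schedule $S^{*}$. Because the algorithm obliviously stacks Phase~2 on top of Phase~1, its makespan equals $C_1 + C_2$, where $C_i$ is the length of Phase~$i$. My plan is to prove $C_1 \le \OPT$ and $C_2 \le \OPT$ individually. In each case I will extract from $S^{*}$ a feasible non-preemptive schedule of the corresponding Phase's input on $m$ machines whose makespan is at most $\OPT$; since OFF is offline-optimal, the corresponding Phase length is then no larger.

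For Phase~1, whose input is $\{\tau_j\}_{j \in [n]}$, consider for each job $j$ its \emph{first piece} in $S^{*}$: the test chunk of length $t_j$ if $S^{*}$ tests $j$, otherwise the untested chunk of length $u_j$. These pieces are mutually non-overlapping on each machine and all lie within $[0,\OPT]$. Their lengths are either $t_j$ or $u_j$, hence at least $\tau_j = \min(t_j,u_j)$, so truncating each piece to its leading $\tau_j$ time units yields a valid non-preemptive schedule of $\{\tau_j\}$ with makespan $\le \OPT$, and thus $C_1 \le \OPT$.

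Phase~2 is handled analogously with input $\{p_j : t_j \le u_j\}$. For an algorithm-tested job $j$ (i.e.\ with $t_j \le u_j$), either $S^{*}$ also tests $j$ and therefore contains an execution chunk of length exactly $p_j$, or $S^{*}$ runs $j$ untested in a single chunk of length $u_j \ge p_j$. In both cases $S^{*}$ supplies a non-preemptive chunk of length at least $p_j$ dedicated to $j$, and chunks belonging to distinct jobs occupy disjoint intervals on each machine. Truncating each such chunk to its leading $p_j$ time units produces a valid non-preemptive schedule of $\{p_j : t_j \le u_j\}$ with makespan $\le \OPT$, so $C_2 \le \OPT$, and consequently $C_1 + C_2 \le 2\,\OPT$.

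The main obstacle is the case analysis inside $S^{*}$: the algorithm and OPT may disagree on whether any individual job is tested, so one must verify in every combination (algorithm-tests-$j$ or not, OPT-tests-$j$ or not) that $S^{*}$ contains non-preemptive chunks long enough to absorb the Phase~1 demand $\tau_j$ and, when applicable, the Phase~2 demand $p_j$. Once these length inequalities are in place, the truncation step and the resulting bound on OFF's makespan are immediate, completing the argument.
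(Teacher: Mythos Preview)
Your proposal is correct and follows the same two-phase decomposition as the paper: bound each phase separately by $\OPT$ and add. The paper's proof is shorter because it argues abstractly: since $\tau_j \le \rho_j$ and $p_j \le \rho_j$ for every job, the optimal makespan on the instance $\{\tau_j\}$ (respectively $\{p_j\}$) is at most the optimum of the original instance, hence $\text{OFF}_{(\tau_j)} \le \OPT$ and $\text{OFF}_{(p_j)} \le \OPT$.

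The one genuine difference is \emph{how} the inequality $\text{OFF}_{(\tau_j)} \le \OPT$ (and likewise for $p_j$) is justified. The paper invokes the pointwise bounds $\tau_j,p_j \le \rho_j$ together with monotonicity of the optimal makespan in the job sizes; this is cleanest when $\OPT$ is the non-preemptive offline optimum, i.e.\ $\OPT = \OPT_{(\rho_j)}$. You instead fix an optimal \emph{test-preemptive} schedule $S^*$ and explicitly carve out, for each job, a contiguous single-machine chunk of length at least $\tau_j$ (respectively $p_j$), then truncate. This gives a concrete non-preemptive witness schedule of makespan at most $\OPT$ without routing through $\rho_j$, so it works directly against the test-preemptive optimum. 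In that sense your argument is slightly more robust on the comparison benchmark, at the cost of the small case analysis you flagged. One cosmetic point: the algorithm stacks Phase~2 starting at the makespan of Phase~1, so $\ALG \le C_1 + C_2$ rather than strict equality; this does not affect your bound.
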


The proof makes use of the assumption that the algorithm has access to unlimited computational power, which is a common assumption in online optimization. If we do not give the online algorithm this power, the result is slightly worse, since offline makespan minimization is strongly NP-hard. We may then make use of the PTAS for offline makespan minimization by Hochbaum and Shmoys~\cite{HochbaumShmoys1987} to achieve a ratio of $2 + \eps$ for any $\eps > 0$, where the runtime of the algorithm increases exponentially with $1/\eps$. The complete version of the proof can be found in Appendix \ref{app:sec:proofs}.
\begin{proof}[Proof sketch]
	Let OFF be any optimal offline algorithm for makespan minimization on $m$ machines. In the first phase, our algorithm tests all jobs except trivial jobs and schedules them for their minimal running time using OFF. The resulting value is bounded by the optimum of the original instance.
	
	In the second phase, we use the offline algorithm OFF again to schedule the remaining jobs optimally. The value of OFF is again bounded by the optimum.
	
	The algorithm obliviously puts the second schedule on top of the first. In the worst case the completion time of the entire schedule is the sum of the two sub-schedules. \qed
\end{proof}
For the lower bound result we now consider the standard preemptive setting where a job can be interrupted at any time.

\begin{theorem}
	In the preemptive setting, no online algorithm for makespan minimization on $m$ identical machines with testing can have a better competitive ratio than $2 - 2/m + 1/m^2$, even if all testing times are equal to $1$.
\end{theorem}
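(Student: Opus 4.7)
The plan is to construct one adaptive-adversary instance and argue it forces the claimed ratio against any deterministic online algorithm. I would take $n := m^2 - m + 1$ identical jobs, each with $t_j = 1$ and $u_j = m$, and let the adversary commit $p_j \in [0, u_j]$ at the moment the algorithm completes the test of job $j$. The rule is: set $p_j = 0$ for every tested job except the one whose test completes last, which receives $p_{j^*} = m - 1$; any job the algorithm declines to test also receives $p_j = 0$. Since all jobs look identical a priori, the algorithm has no way to identify $j^*$ in advance, while an adaptive adversary, knowing the algorithm's deterministic strategy, can always label the latest-tested job as $j^*$.

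First I would compute $\OPT = m$. With $p_{j^*} = m - 1$ we get $\rho_{j^*} = \min(1 + (m - 1), m) = m$ and $\rho_j = 1$ for the other $n - 1$ jobs, giving total optimal work $m + (n - 1) = m^2$ and maximum job size $m$, so the standard preemptive lower bound $\max(m^2/m,\, m) = m$ is matched by a McNaughton-type schedule that runs the hidden job on one machine over $[0, m]$ and distributes the remaining $m^2 - m$ unit-tests evenly across the other $m - 1$ machines.

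The crux is to lower bound $\ALG$ by $2m - 2 + 1/m$, which I would do by splitting on whether the algorithm tests every job. If it tests all $n$ jobs, then since the $n$ unit-tests are distributed over $m$ machines, the latest test completion time $T^*$ satisfies $T^* \ge n/m = m - 1 + 1/m$; because the hidden job's $m - 1$ remaining execution units are sequential and can only start after its own test, this yields $\ALG \ge T^* + (m - 1) \ge 2m - 2 + 1/m$. If instead $k \ge 1$ jobs are left untested, each such job consumes $m$ units of machine-time sequentially, so the capacity bound $mT \ge km + (n - k)$ forces the last test to finish no earlier than time $m$, and adding the hidden job's sequential $m - 1$ execution gives $\ALG \ge 2m - 1 \ge 2m - 2 + 1/m$. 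Dividing by $\OPT = m$ then yields the claimed ratio in both cases.

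The main obstacle I anticipate is the untested-jobs case under full preemption, because the algorithm can freely interleave the long $u_j = m$ runs with testing across many machines. The resolution is to observe that each untested job consumes exactly $m$ machine-time units regardless of how it is sliced in time or among machines, while the hidden job's execution remains unparallelizable and must follow its own test, so the capacity count combined with the sequential tail still pushes the makespan past $2m - 1$.
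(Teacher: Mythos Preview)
Your construction and the Case~1 argument track the paper's idea closely, but there are two genuine gaps.

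First, the adversary rule ``give $p_{j^*}=m-1$ to the job whose test completes last'' is not well-defined as an adaptive strategy: at the moment a test completes, whether it is the \emph{last} one depends on the algorithm's future behaviour, which in turn depends on the value you reveal now. Consider an algorithm that tests only job~$1$ and, if $p_1=0$, runs all remaining jobs untested, but if $p_1>0$ tests everything. Under your rule the adversary must set $p_1=m-1$ (job~$1$ is the only test in the $p_1=0$ branch), yet this reveal causes the algorithm to test all other jobs, so job~$1$ is no longer last. The paper avoids this circularity with a threshold rule: set $p_j=m-1$ for the first test completing at time at least $m-1+1/m$.

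Second, your Case~2 argument is incorrect. The inequality $mT\ge km+(n-k)$ is a bound on the \emph{makespan}~$T$, not on the completion time of the last test, and you cannot add the hidden execution $m-1$ on top of it. Concretely, with $k=1$ the algorithm can run all $n-1=m^2-m$ tests first on all $m$ machines, so the last test finishes at time $m-1$, not~$m$; an appropriate interleaving then achieves makespan exactly $2m-2+1/m$, strictly below your claimed $2m-1$. Worse, with your choice $u_j=m$ the algorithm can, upon seeing a large $p_{j^*}$, switch to testing the remaining jobs and thereby beat the target ratio on the specific instance your rule selects. The paper sidesteps both issues by taking $u_j=M$ arbitrarily large, which forces every competitive algorithm to test all jobs; Case~2 then disappears, and the threshold rule together with the load argument $L\ge m-1+1/m$ guarantees the adversary eventually assigns the large processing time.
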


	We note that $\varphi \approx 1.6180$ also remains a lower bound even for the preemptive case, since two machines cannot run the same job concurrently. It holds $2-2/m+1/m^2 < \varphi$ only for values of $m \le 4$.

\begin{proof}
	Let us consider the following example: Let $M$ be a sufficiently large number and let $m(m-1)$ small jobs be given with $t_j =1, p_j = 0, u_j = M$ as well as one large job $f$ with $t_f = 1, p_f = m-1, u_f = M$. As argued in the proof of Theorem \ref{thm:simple_lower_bound}, OPT has a value of $m$ and we may assume that the algorithm tests every job.
	
	In the preemptive setting we required that any execution of the actual processing time of a job can only happen after its test is completed, therefore any job $j$ that finished testing at some time $t$ is completed not earlier than $t + p_j$. The adversary decides the processing time of $j$ by the following rule: If $t \ge m-1+1/m$ and job $f$ has not yet been assigned, set $p_j = m - 1$ (i.e.\ set $j = f$). Else, set $p_j = 0$.
	
	If the adversary assigns job $f$ at any point, then job $f$ finished testing at time $t \ge m-1+1/m$. It follows that
	\begin{equation*}
	\ALG \ge t + p_f \ge m-1+\frac{1}{m} + m-1 = 2m - 2 + \frac{1}{m}.
	\end{equation*}
	Hence the competitive ratio is at least $\frac{\ALG}{\OPT} \ge 2 - \frac{2}{m} + \frac{1}{m^2}$.
	
	All that remains is to show that this assignment of $f$ happens at some point during the runtime of the algorithm. Assume that this is not the case, i.e.\ all jobs finish testing earlier than $m-1+1/m$. The adversary sets all $p_j = 0$, hence it follows directly that all jobs are completely finished before $m-1+1/m$. But this means that the algorithmic solution has a value of $\ALG < m-1+1/m$.
	
	Since $t_j = 1$ for all jobs, we know that the average load $L$ fulfills
	\begin{equation*}
	L \ge \frac{1}{m} (m(m-1)+1) = m-1+1/m.
	\end{equation*}
	But $L$ is a lower bound on the optimal value of the instance, even in the preemptive setting, contradicting $\ALG < m-1+1/m$. \qed
\end{proof}

\section{Conclusion}
\label{sec:conclusion}

We presented algorithms and lower bounds for the problem of scheduling with testing on multiple identical parallel machines with the objective of minimizing the make\-span. Such settings arise whenever a preliminary action influences cost, duration or difficulty of a task. Our main results were a $3.1016$-com\-pet\-i\-tive algorithm for the non-preemptive case and a tight $2$-com\-pet\-i\-tive algorithm for the preemptive case if the number of machines becomes large.

Apart from closing the gaps between our ratios and the lower bounds, we propose the following consideration for future work: A natural generalization of our setting is to consider \emph{fully-online} arrivals, where jobs arrive one by one and have to be scheduled immediately. It is clear that this setting is at least as hard as the problem considered in this paper. In Appendix \ref{app:sec:fully_online}, we provide a simple lower bound with value $2$ for this generalization that holds for all values of $m \ge 2$. An upper bound is clearly given by the Greedy algorithm we provided in Section \ref{sec:non_preemptive}. Finding further algorithms or lower bounds for this new setting is a compelling direction for future research.

\bibliographystyle{splncs04}
\bibliography{Mult_Parallel_Machines_Full}

\clearpage
\appendix

\section{Results for the Fully-Online Setting}
\label{app:sec:fully_online}

As an additional consideration, we also give some results for the fully-online setting, where the jobs arrive sequentially one by one. Whenever a job $j$ arrives, its upper bound $u_j$ and testing time $t_j$ is revealed. Testing the job then reveals the processing time $p_j$.

In this section we provide improved lower bounds compared to the semi-online setting, for which the lower bound was given by $\max(\varphi,2-\frac{1}{m})$. Recall that this was tight for $m=1$. The following result gives a better bound for all instances with at least two machines.

\begin{theorem}
	Let $m \ge 2$. In the fully-online setting, no algorithm is better than $2$-competitive for the problem of makespan minimization on multiple identical machines with testing, even if all testing times are equal to $1$.
\end{theorem}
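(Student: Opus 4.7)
The plan is to exploit pigeonhole on $m+1$ essentially simultaneously arriving jobs, all with $t_j = 1$ and the same very large upper bound $u_j = M$. By choosing $M$ much larger than any other quantity in the construction, I may assume that the algorithm tests every job, since otherwise its ratio is already unbounded in $M$, so the only meaningful online decision left is a machine assignment. Releasing all $m+1$ jobs within a time window of length less than $1$ ensures that every assignment is made before the first test can finish, so the algorithm is completely oblivious to the $p_j$ values when it commits. Pigeonhole then forces at least one machine $i^\ast$ to receive two jobs, say $j_a$ and $j_b$; this is the collision the adversary will exploit.

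Once the algorithm has committed all $m+1$ assignments, the adversary reveals $p_{j_a} = p_{j_b} = K$ for a large parameter $K$, and sets $p_j = 0$ for the remaining $m-1$ jobs. In the non-preemptive setting $j_a$ and $j_b$ stay on $i^\ast$, so its load is at least $2 + 2K$, giving $\ALG \ge 2 + 2K$. The offline optimum sees two ``heavy'' jobs with $\rho = 1 + K$ and $m - 1$ ``light'' jobs with $\rho = 1$. For $m \ge 3$ it places each heavy job alone on its own machine and distributes the $m-1$ light jobs on the remaining $m - 2$ machines (at most two per machine), giving $\OPT = \max(1+K, 2) = 1+K$ for $K \ge 1$, so the ratio is exactly $2$. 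For $m = 2$ the one light job must share a machine with one of the two heavy jobs, forcing $\OPT = 2 + K$, so the ratio $(2 + 2K)/(2 + K)$ tends to $2$ from below as $K \to \infty$.

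The main obstacle I expect is the case $m = 2$, where the bound is attained only in the limit; the statement must then be read in the standard way as ``no competitive ratio strictly below $2$ is achievable'', which is exactly what ``not better than $2$-competitive'' means. Apart from that, the only points needing care are the two quantitative choices, namely taking $M$ large enough that a single untested job already blows up the ratio, and making the release window strictly shorter than $1$ so that the online algorithm is genuinely oblivious at the moment pigeonhole forces the collision; both are routine.
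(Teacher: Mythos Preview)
Your argument rests on a model assumption that the paper does not make. You release all $m+1$ jobs within a time window shorter than one test, so that the algorithm must commit every machine assignment before any $p_j$ is revealed. But the paper's fully-online setting has no release times: jobs arrive in a list, and when job $j$ is assigned to a machine and tested the adversary must disclose $p_j$ \emph{before} job $j+1$ appears. The adversary therefore cannot wait to see where the pigeonhole collision lands and then retroactively declare both colliding jobs heavy.

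In the actual sequential model your construction collapses. Suppose the adversary commits $p_j = 0$ for each of the first $m$ jobs as they are placed (it must fix each $p_j$ in turn, without knowing future placements). After these, every machine has load exactly $1$; job $m{+}1$ with $p_{m+1}=K$ brings one machine to $2+K$, while $\OPT = 1+K$, so the ratio is $(2+K)/(1+K)\to 1$. If instead the adversary makes an early job heavy, the algorithm sees this immediately and simply avoids that machine for the remaining jobs, again yielding ratio $1$. No adaptive per-job rule recovers the factor $2$, because the information the algorithm gains from each completed test is precisely what your argument needs to hide.

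The paper sidesteps this entirely. It makes the first $m$ jobs trivial ($t_j = u_j = 1$), so there is no testing dilemma and each contributes load exactly $1$; the adversary threatens to stop after these $m$ jobs, which forces a flat schedule of height $1$ or an immediate ratio of $2$. All the hardness is then concentrated in a single final job with a genuine test-or-not dilemma: whichever decision the algorithm makes, the adversary fixes $p_f$ so that the ratio comes out exactly $2$. No deferred revelation is needed because only one uncertain job is involved.
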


\begin{proof}
	We consider an instance with $m+1$ jobs where the first $m$ jobs have $t_j = u_j = 1$ for all $j$. Additionally, there is a single job $f$ with values $t_f=2, u_f=3$. The processing times of the first $m$ jobs are irrelevant, since it is clear that running a job untested is always optimal. If the algorithm tests the final job, then the adversary sets $p_f = 3$, otherwise it sets $p_f = 0$.
	
	The $m$ smaller jobs arrive first. If the algorithm stacks any two or more of these jobs on the same machines, then we have $\ALG / \OPT \ge 2$ for the partial instance consisting only of the first $m$ jobs.
	
	Hence assume this is not the case and the algorithm produces a flat schedule of height $1$ after the first $m$ jobs. If the final job is now part of the instance, the optimum puts two of the smaller jobs on the same machine and can run $f$ on its own machine.
	
	If the algorithm tests the final job then it has an algorithmic running time of $\pfA = t_f + p_f = 2+3$. The optimum runs the job untested, resulting in a final optimal value of $\rho_j = 3$. In total:
	\begin{equation*}
	\frac{\ALG}{\OPT} = \frac{1 + (2 + 3)}{3} = 2
	\end{equation*}
	On the other hand, if the algorithm runs $f$ untested, then the algorithmic running time is given by $\pfA = u_f = 3$, while the optimum tests the job yielding $\rho_f = 2$. This gives
	\begin{equation*}
	\frac{\ALG}{\OPT} = \frac{1 + 3}{2} = 2.
	\end{equation*} \qed
\end{proof}

We now want to improve this simple and direct lower bound to some value larger than $2$. It turns out this is increasingly harder if the number of machines $m$ increases. The reason for this difficulty lies in the typical construction of lower bound examples based on several 'rounds' of $m$ jobs where the algorithm is forced to produce flat schedules in order to be competitive. The example above also employs this construction in the first step.

We have not yet used the difficulty in deciding the testing strategy for such rounds of $m$ jobs. For $m=2$, we can improve the lower bound to a value close to $2.1$.

\begin{theorem}
	In the fully-online setting, no algorithm is better than $2.0953$-com\-pet\-i\-tive for the problem of makespan minimization on two identical machines with testing.
\end{theorem}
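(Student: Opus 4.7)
The plan is to push beyond the $2$-competitive lower bound of the previous theorem by exploiting, on $m=2$ machines, the interaction between uncertainty about future arrivals and uncertainty in the optimal testing decision. The instance I propose consists of a first round of two jobs in which both the machine assignment and the testing decisions are essentially forced, followed by one more job whose parameters are tuned against the committed state of the algorithm.

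First, I would let two identical jobs arrive with $t_j = 1$ and upper bound $u_j = \alpha$ for a parameter $\alpha \ge \varphi$ to be optimized later. If the algorithm assigns both to the same machine, the adversary sets $p_j = \alpha - 1$ so that each job has optimal offline running time $\alpha$ and the ratio is already at least $2$; hence we may assume a flat schedule with one job on each machine. For each of these jobs the algorithm independently decides whether to test. On every testing decision the adversary reveals $p_j \in \{0,\alpha-1\}$ in the worst possible way, leaving the algorithm in one of a small number of load configurations parameterized by how many jobs were tested and which processing times were returned.

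Next, I would present a final job $f$ with $t_f = 1$ and upper bound $u_f = \beta$ for a second parameter $\beta$; the algorithm again chooses whether to test $f$ and on which machine to run it, and the adversary again picks $p_f$ after seeing the testing decision. In every branch the algorithmic makespan is bounded below by the lighter load after round one plus the algorithmic running time $\pfA$, while the optimum, which sees all processing times offline, can either place $f$ alone on a machine and pair the round-one jobs together, or stack $f$ with the lighter round-one job. A branch-by-branch computation yields an explicit lower bound on $\ALG/\OPT$ as a function of $\alpha$, $\beta$, and the algorithm's three testing decisions.

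Finally, I would choose $\alpha$ and $\beta$ so that the worst-case ratio is equalized across the branches that remain active; this balancing condition reduces to a low-degree polynomial equation whose relevant root is approximately $2.0953$. The main obstacle is the combinatorial bookkeeping: the three binary testing decisions together with two machine assignments for $f$ generate several branches, and one has to ensure that no algorithmic strategy escapes via asymmetry. In particular, an exchange argument is needed to rule out testing exactly one of the two identical round-one jobs (for instance, by showing that any such mixed strategy is dominated, against the adaptive adversary, either by testing both or by testing neither), after which the minimax optimization collapses to a tractable problem in $\alpha$ and $\beta$ with the stated root as its value.
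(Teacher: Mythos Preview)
Your high-level plan matches the paper's: three jobs, force a flat assignment in round one, then branch on the remaining testing decisions and balance parameters. However, two concrete gaps prevent the argument from reaching $2.0953$.

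First, the exchange argument you invoke does not hold. Take the algorithm that tests job~$1$, does not test job~$2$, and tests job~$3$, always assigning to the least-loaded machine. Against the natural adversary ($p_1=\alpha$ after the test, $p_2=0$), the round-one loads are $(1+\alpha,\alpha)$ with $\rho_1=\alpha$, $\rho_2=1$. Job~$3$ goes on the lighter machine, so $\ALG=\alpha+1+\beta$ while $\OPT$ is the optimal makespan of $(\alpha,1,\beta)$. If $\beta\ge\alpha+1$ then $\OPT=\beta$ and the ratio is $1+(\alpha+1)/\beta\le 2$; if $\beta<\alpha+1$ one checks that the best pairing gives $\OPT\ge(\alpha+1+\beta)/2.0953$ in no regime. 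Thus the asymmetric strategy is \emph{not} dominated by testing both or neither; it is strictly better for the algorithm, because it achieves essentially the same $\ALG$ as ``neither tested'' while inflating $\OPT$ through $\rho_1=\alpha>1$. No choice of $(\alpha,\beta)$ with $t_j\equiv 1$ closes this branch at $2.0953$.

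Second, you fix $t_f=1$, but the paper's construction crucially uses a non-unit test for the third job. The paper dispatches job~$1$ differently: it sets $u_1/t_1=\varphi$, so that whatever the algorithm does the adversary can adaptively rescale the parameters of the later jobs and work in units where $p_1^A=\varphi$ and $\rho_1=1$. This removes the branching on job~$1$ cleanly, without any exchange argument. The remaining four free parameters $(t_2,u_2,t_3,u_3)=(b,c,d,e)$ are then optimized numerically, yielding $b=1$, $c=\varphi$, $d=\varphi^2$, $e\approx 3.8675$, which equalizes the four branches \eqref{eq:lb_two_machines_3}--\eqref{eq:lb_two_machines_6} at roughly $2.0953$. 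Your restriction to two parameters with unit testing discards exactly the degree of freedom (large $t_3$) needed to reach that value.
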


The proof uses parameter optimization based on the testing and running times of $m$ jobs in one 'round' and of the final job. Since the number of parameters in this construction increases exponentially in dependence of $m$, we were unable to extend this result to general values of $m$. In particular, it is not directly clear whether parameters for instances with larger values of $m$ can be chosen such that the same or a higher bound holds. We present the easiest case of $m=2$ as a stand-in for all results with small values of $m$ which are still computationally tractable.

\begin{proof}
	The counterexample consists of three jobs. The first job has a ratio between upper bound and testing time of $r_1 = u_1/t_1 = \varphi$. We may scale all remaining running times such that, independent of the testing decision of the algorithm for the first job, we can always assume that $p_1^A = \varphi$ and $\rho_1 = 1$.
	
	The running times of the second and third job are parameterized with the following values: ${t_2 = b, u_2 = c, t_3 = d, u_3 = e}$ with $0 \le b \le c$ and $0 \le d \le e$. The adversary always chooses $p_j$ such that the outcome is worst possible for the algorithm, that is $p_j = u_j$ if $j$ is tested and $p_j = 0$ otherwise.
	
	We start by considering the first 'round' of jobs, which consists only of the first and second job. We want the algorithm to schedule these jobs on two distinct machines. Hence we have to make sure that the competitive ratio is high in case the algorithm uses the same machine for both jobs. So assume for now that the algorithm does.
	
	Clearly, the optimum always uses both machines if there is only two jobs. Hence it has a value of $\max(1,\rho_2)$. If the algorithm tests the second job, then
	\begin{equation}
	\label{eq:lb_two_machines_1}
	\frac{\ALG}{\OPT} = \frac{\varphi + b+c}{\max(1,c)}.
	\end{equation}
	If the algorithm runs the second job untested, then
	\begin{equation}
	\label{eq:lb_two_machines_2}
	\frac{\ALG}{\OPT} = \frac{\varphi + c}{\max(1,b)}.
	\end{equation}
	
	These are the first two fractions we want to maximize. Assume now that the algorithm does \emph{not} use the same machine for the first two jobs. Then the third job arrives and will be scheduled on top of the smaller of the two previous jobs. This gives an algorithmic value of $\min(\varphi, p_2^A) + p_3^A$.
	
	We assume that the values of the parameters are such that the optimum puts jobs 1 and 2 on one machine and job 3 on the other. If this is actually not the case then the optimal value can only be smaller. Hence we have $\OPT \le \max(1+\rho_2,\rho_3)$.
	
	We now differentiate four cases corresponding to the testing decision of the algorithm with respect to jobs 2 and 3. The realizations of the processing times are chosen by the adversary as described above.
	
	Jobs 2 and 3 are tested. Then 
	\begin{equation}
	\label{eq:lb_two_machines_3}
	\frac{\ALG}{\OPT} \ge \frac{\min(\varphi,b+c) + d+e}{\max(1+c,e)}.
	\end{equation}
	
	Job 2 is tested and job 3 is not tested. Then 
	\begin{equation}
	\label{eq:lb_two_machines_4}
	\frac{\ALG}{\OPT} \ge \frac{\min(\varphi,b+c) + e}{\max(1+c,d)}.
	\end{equation}
	
	Job 2 is not tested and job 3 is tested. Then 
	\begin{equation}
	\label{eq:lb_two_machines_5}
	\frac{\ALG}{\OPT} \ge \frac{\min(\varphi,c) + d+e}{\max(1+b,e)}.
	\end{equation}
	
	Jobs 2 and 3 are not tested. Then 
	\begin{equation}
	\label{eq:lb_two_machines_6}
	\frac{\ALG}{\OPT} \ge \frac{\min(\varphi,c) + e}{\max(1+b,d)}.
	\end{equation}
	
	All that remains is optimizing the minimum value of \eqref{eq:lb_two_machines_1}-\eqref{eq:lb_two_machines_6}. We used numeric optimization and received values of $b=1, c=\varphi, d=\varphi^2$ and $e \approx 3.8675$. This yields a minimum of
	\begin{equation*}
	\min \left( \eqref{eq:lb_two_machines_1} - \eqref{eq:lb_two_machines_6} \right) \approx 2.0953.
	\end{equation*} \qed
\end{proof}

For $m=2$ it turned out that we may achieve a lower bound larger than $2$ with two equal-sized jobs in the first round. This changes as soon as $m \ge 3$, where three equal-sized jobs in the first round lead to a ratio of at most $2$ when the algorithm stacks two of these three jobs on the same machine. It is unclear whether this can be remedied for arbitrary values of $m$ by choosing suitable parameters.

\section{An Improved Result for Uniform Instances with a Small Number of Uncertain Jobs}
\label{app:sec:uniform_lambda_1}

For an additional result in the uniform setting we take a closer look at jobs whose lower bound is smaller than their testing time. We call any job with $u_j > 1$ (in the case of uniform testing) \emph{uncertain}. For these jobs, the algorithm has to make a non-trivial decision whether to test or not. For all other jobs, running the upper bound untested is optimal. Let $\lambda$ be ratio between uncertain jobs and machines, that is
\begin{equation*}
\lambda = \frac{|\{ j \in [n]: u_j > 1 \}|}{m}.
\end{equation*}

\begin{lemma}
	\label{lem:uniform_lambda_1}
	For instances with $\lambda \le 1$, there exists a $\varphi \, (\frac{4}{3} - \frac{1}{3m})$-competitive non-preemptive algorithm for the uniform testing case.
\end{lemma}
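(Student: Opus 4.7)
The plan is to exhibit a simple two-phase algorithm that leverages $\lambda \le 1$ to dedicate one machine to each uncertain job, then place the trivial jobs by LPT on the least-loaded machine; the analysis will blend the per-job factor $\varphi$ from Proposition~\ref{prop:claims_case_dist} with the classical $4/3 - 1/(3m)$ argument of Graham for LPT.

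First I will state the algorithm. For uncertain jobs ($u_j > 1$) the testing decision follows Proposition~\ref{prop:claims_case_dist} with $\alpha = \varphi$, i.e.\ test iff $u_j \ge \varphi$, while trivial jobs ($u_j \le 1$) are run untested, which is optimal even offline since $t_j = 1 \ge u_j$. Phase~1 places every uncertain job on a distinct machine, which is feasible because there are at most $m$ of them. Phase~2 sorts the trivial jobs by non-increasing $u_j$ and feeds them, in that order, to List Scheduling on the currently least-loaded machine.

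Next I will prove competitiveness by picking a job $l$ of maximum completion time and splitting on its type. If $l$ is uncertain, its machine was empty at placement, so $\ALG = \plA \le \varphi \rho_l \le \varphi \OPT$ and we are done. If $l$ is trivial, the least-loaded property yields $t \le \frac{1}{m}(\sum_j \pjA - u_l)$, and combining with the global bound $\sum_j \pjA \le \varphi \sum_j \rho_j \le \varphi m \OPT$ (Proposition~\ref{prop:claims_case_dist} applied term-wise, together with $\pjA = \rho_j$ for trivial jobs) gives $\ALG \le \varphi \OPT + u_l(1 - 1/m)$. When $u_l \le \varphi \OPT / 3$, a direct substitution already yields the target ratio $\varphi (4/3 - 1/(3m)) \OPT$.

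The main obstacle is the boundary subcase $u_l > \varphi \OPT / 3$, where the averaging bound is too weak and must be replaced by a structural LPT-style argument. Since $u_l \le 1$, the assumption forces $\OPT < 3/\varphi < 2$, so every uncertain job has $\rho_j \ge 1 > \OPT/2$ (using $t_j = 1$ and $u_j > 1$), and every trivial job preceding $l$ in LPT order satisfies $u_j \ge u_l > \varphi \OPT / 3 > \OPT/2$, where I crucially exploit $\varphi > 3/2$. Hence OPT can place at most one job per machine, so the instance contains at most $m$ jobs in total, and our algorithm likewise schedules each job on its own machine, giving $\ALG = \max_j \pjA \le \varphi \OPT$, which is dominated by the target ratio since $4/3 - 1/(3m) \ge 1$ for all $m \ge 1$. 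The care needed here lies in checking that uncertain jobs are automatically large ($\rho_j \ge 1$ is free in the uniform setting) and that the strict gap $\varphi/3 > 1/2$ propagates through, so the structural argument closes without a loose end.
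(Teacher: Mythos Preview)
Your algorithm matches the paper's, but your analysis has a genuine gap in the boundary subcase $u_l > \varphi\,\OPT/3$. You correctly argue that every uncertain job satisfies $\rho_j \ge 1 > \OPT/2$ and that every trivial job preceding $l$ in LPT order satisfies $u_j \ge u_l > \OPT/2$; from this one may conclude that the set of uncertain jobs together with trivial jobs scheduled up to and including $l$ has size at most $m$. However, your next sentence asserts that ``the instance contains at most $m$ jobs in total'', which does not follow: trivial jobs scheduled \emph{after} $l$ may be arbitrarily small and arbitrarily many, and nothing you have said bounds their number. The subsequent claim ``our algorithm likewise schedules each job on its own machine, giving $\ALG = \max_j \pjA$'' is therefore unjustified.

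The repair is short. From the correct conclusion that at most $m$ jobs lie in $U$ together with the trivial jobs up to $l$, at most $m-1$ jobs are placed before $l$. The $|U|$ uncertain jobs occupy distinct machines with load at least $1$, so the trivial jobs preceding $l$ (there are at most $m-1-|U|$ of them) each land on an empty machine; hence an empty machine still remains when $l$ is placed, giving $t=0$ and $\ALG = u_l = \rho_l \le \OPT$, well under the target ratio.

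It is worth noting that the paper avoids this hand-rolled LPT argument altogether. It observes that uncertain jobs have $\pjA \ge 1$ while trivial jobs have $\pjA = u_j \le 1$, so your two-phase schedule \emph{is} the LPT schedule on the instance $(\pjA)_j$ (up to tie-breaking, which does not affect Graham's bound). Invoking Graham's $(\tfrac{4}{3}-\tfrac{1}{3m})$ guarantee as a black box yields $\ALG = \mathrm{LPT}_{(\pjA)} \le (\tfrac{4}{3}-\tfrac{1}{3m})\,\mathrm{OPT}_{(\pjA)}$, and the proof finishes via the scaling $\mathrm{OPT}_{(\pjA)} = \varphi\,\mathrm{OPT}_{(\pjA/\varphi)} \le \varphi\,\mathrm{OPT}_{(\rho_j)} = \varphi\,\OPT$, using $\pjA/\varphi \le \rho_j$ term-wise from Proposition~\ref{prop:claims_case_dist}. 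This route sidesteps the case split on $u_l$ entirely and is considerably cleaner.
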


\begin{proof}
	Our algorithm for the uniform setting with $\lambda \le 1$ first tends to all uncertain jobs before considering any others. There are at most $m$ such uncertain jobs by the definition of $\lambda$. We make the algorithm simply assign them to one machine each. By Proposition~\ref{prop:claims_case_dist} we have $\pjA \le \varphi \rho_j$ for all uncertain jobs if we choose the parameter $\alpha$ equal to the golden ratio.
	
	We can now employ a simple trick to solve the rest of the instance: Since the only remaining jobs are those without uncertainty, the algorithm has complete information about the algorithmic running times of the instance, that is it knows all values $\pjA$, even those of jobs who are not yet scheduled. At this point we employ the \emph{Largest Processing Time} (LPT) algorithm. LPT is a $(\frac{4}{3}-\frac{1}{3m})$-approximation for makespan minimization on parallel machines \cite{Graham1969}.
	
	Our algorithm has only assigned at most one job per machine so far. Additionally, since $\pjA \ge \min(1+p_j,u_j) \ge 1$ for all uncertain jobs and $\pjA \le 1$ for all other jobs, we know that these already assigned jobs correspond to the largest jobs w.r.t.\ the algorithmic running times. Hence the algorithm can assign all remaining jobs such that the final assignment is exactly the same as in the solution given by the offline algorithm LPT.
	
	Let us denote for any algorithm $A$ the solution given by $A$ on the instance with running times $x_j$ as $A_{(x_j)}$. Then the final value of our online algorithm ALG fulfills
	\begin{equation*}
	\ALG = \text{LPT}_{\left(\pjA\right)} \le \left( \frac{4}{3} - \frac{1}{3m} \right) \text{OPT}_{\left(\pjA\right)} = \varphi \left( \frac{4}{3} - \frac{1}{3m} \right) \text{OPT}_{\left(\frac{1}{\varphi}\, \pjA \right)}.
	\end{equation*}
	
	As we have argued above, it holds that $\frac{1}{\varphi}\ \pjA \le \rho_j$ for all jobs. In particular, it follows that the instance with processing times $\frac{1}{\varphi}\ \pjA$ has an optimal solution which is not larger than the optimal solution of the instance with processing times $\rho_j$. Therefore,
	\begin{equation*}
	\varphi \left( \frac{4}{3} - \frac{1}{3m} \right) \text{OPT}_{\left(\frac{1}{\varphi}\, \pjA \right)} \le  \varphi \left( \frac{4}{3} - \frac{1}{3m} \right) \text{OPT}_{(\rho_j)}.
	\end{equation*}
	
	Altogether, it follows that $\ALG \le \varphi \left(\frac{4}{3} - \frac{1}{3m}\right) \OPT$. \qed
\end{proof}

\section{Proofs}
\label{app:sec:proofs}
\subsection{Proof of Theorem \ref{thm:simple_lower_bound}}

\begin{proof}
	Let $M$ be a sufficiently large number and consider the following instance: On $m$ machines we are given $m(m-1)$ small jobs with values $t_j = 1, p_j=0$ and $u_j = M$. Additionally, we are given a single large job $f$ with $t_f = 1, p_f = m-1, u_f = M$.
	
	The optimum tests all jobs and has a value of
	\begin{equation*}
	\OPT = m,
	\end{equation*}
	which is achieved by distributing all small jobs onto $m-1$ machines and running job $f$ on the final machine.
	
	It is immediately clear that if an algorithm decides to run any job untested, the ratio between the algorithmic solution and the optimum becomes larger as $M$ increases:
	\begin{equation*}
	\frac{\ALG}{\OPT} \ge \frac{M}{m} = \xrightarrow[M \to \infty]{} \infty
	\end{equation*}
	
	Hence assume that the algorithm tests everything. Since all jobs have the same testing times and upper bounds, the algorithm cannot distinguish between them. In particular, it does not know which one the large job $f$ is. Hence the adversary can decide the realization of the processing times whenever a job is being tested. Assume the algorithm runs some job $j$ on machine $\mbar$. Let $N(\mbar)$ be the current number of jobs on machine $\mbar$ excluding $j$. Then the adversary sets $p_j$ as follows:
	\begin{itemize}
		\item If $N(\mbar) \ge m-1$ and job $f$ is not yet run, set $p_j = m - 1$ (i.e.\ set $j = f$).
		\item Else, set $p_j = 0$.
	\end{itemize}
	
	If at any point the algorithm tests some job $j$ and the corresponding machine $\mbar$ fulfills $N(\mbar) \ge m-1$ for the first time, then the adversary sets $j = f$ and we have
	\begin{equation*}
	\ALG \ge N(\mbar) + m \ge (m-1) + m = 2 m - 1.
	\end{equation*}
	
	The competitive ratio of the algorithm is then given by
	\begin{equation*}
	\frac{\ALG}{\OPT} \ge \frac{2 m - 1}{m} = 2 - \frac{1}{m}.
	\end{equation*}
	
	It remains to show that at some point the number of jobs on all machines is at least $m-1$, and hence the algorithm is forced to run the next job on such a machine. Assume this is not the case and the adversary declares the processing times of all jobs to be small. The average load on the machines after $m(m-1)+1$ such jobs is given by $(m(m-1)+1)/m > m-1$. Since the adversary has not set any job as large, all machines $m_i$ must also have a load of at most $m-1$, which is a contradiction. \qed
\end{proof}

\subsection{Proof of Theorem \ref{thm:greedy_algorithm}}

\begin{proof}
	The value of the optimum is at least as large as the average sum of the optimal offline running times, or
	\begin{equation*}
	L := \frac{1}{m} \sum_{j \in [n]} \rho_j \le \OPT.
	\end{equation*}
	At the same time, we know that the optimum has to at least schedule every job on some machine:
	\begin{equation*}
	\rho_j \le \OPT \quad \forall j \in [n]
	\end{equation*}
	We set $\alpha = \varphi$ in Proposition \ref{prop:claims_case_dist} and combine parts (a) and (b) to bound the algorithmic running time:
	\begin{equation*}
	\pjA \le \varphi \rho_j \quad \forall j \in [n]
	\end{equation*}
	
	Let $l$ be the job that finishes last in the schedule. Let $t$ be the minimum machine load right before $l$ is assigned. It follows that job $l$ starts at time $t$ and finishes at time $t + \plA$. This implies that the value of the algorithm is equal to $t + \plA$ as well.
	
	The value of $t$ is at most the average sum of algorithmic running times of all jobs scheduled before $l$. We overestimate this average using all jobs except $l$ itself. We receive
	\begin{equation*}
	\begin{aligned}
	t 	&\le \frac{1}{m} \sum_{j \neq l} \pjA \\
	&\le \frac{\varphi}{m} \sum_{j \neq l} \rho_j \\
	&= \frac{\varphi}{m} \sum_{j \in [n]} \rho_j - \frac{\varphi}{m} \rho_l \\
	&= \varphi L - \frac{\varphi}{m} \rho_l
	\end{aligned}
	\end{equation*}
	
	We can estimate the algorithmic value by
	\begin{equation*}
	\begin{aligned}
	\ALG 	&= t + \plA \\
	&\le \varphi L - \frac{\varphi}{m} \rho_l + \varphi \rho_l \\
	&\le \varphi L + \varphi \left(1 - \frac{1}{m} \right) \rho_l \\
	&\le \varphi \left(2 - \frac{1}{m} \right) \OPT,
	\end{aligned}
	\end{equation*}
	where we used the lower bounds \eqref{eq:lower_bound_average} and \eqref{eq:lower_bound_max} in the last step.
	
	Finally, we provide a short example to see that the above analysis is tight. We note that the counterexample depends on the fact that the algorithm does not sort the jobs. It is unclear whether Greedy with some additional sorting strategy yields a provably better result. However, since Greedy without sorting only considers one job after the other, it is directly applicable to the fully online case (see Section \ref{app:sec:fully_online}).
	
	Consider $m(m-1)$ small jobs with $t_j = 1, p_j = u_j = \varphi$ and a single large job $f$ with $t_f = m, p_f = u_f = \varphi m$. It is clear that the optimal makespan is given by $\varphi m$.
	
	Since Greedy doesn't sort the jobs, we may assume it tests and schedules all small jobs first. Afterwards, all machines have a load of $(m-1) (1+\varphi)$. Then, job $f$ is tested and run, yielding a final makespan of
	\begin{equation*}
		(m-1) (1+\varphi) + m (1+\varphi) = (2m-1) \varphi^2.
	\end{equation*} \qed
\end{proof}

\subsection{Proof of Proposition \ref{prop:SBS_estimates}}

\begin{proof}
Let the final job $l$ and the minimum machine load $t$ before job $l$ be defined as in the proof of Theorem \ref{thm:main_alg}. We want to estimate the value of the algorithm $\ALG = t + \plA$.

The value of $t$ is bounded by the average of the algorithmic running times of all jobs before $l$. Let $J_l$ be the set of jobs the algorithm assigns before $l$. Then:
\begin{equation*}
	\begin{aligned}
	t 	&\le \frac{1}{m} \sum_{j \in J_l} \pjA \\
		&= \frac{1}{m} \sum_{j \in J_l \cup {l}} \pjA - \frac{\plA}{m} \\
		&\le \frac{1}{m} \sum_{j \in J_l \cup {l}} \alpha_j \rho_j - \frac{\plA}{m} \\
	\end{aligned}
\end{equation*}

\emph{Case 1}: Job $l$ is in $S_1$. In this case, $l$ is the first job assigned to its machine by the definition of the assignment for jobs in this set. Since $l$ is also the last job on its machine, it follows that $l$ is the \emph{only} job on its machine and hence $t = 0$. By \eqref{eq:SBS_bounds} and \eqref{eq:lower_bound_max} we have
\begin{equation*}
t + \plA \le \alpha_l \rho_l = \varphi \rho_l \le \varphi \OPT.
\end{equation*}

\emph{Case 2}: $l \in B$. In this case the set $J_l$ only contains jobs from $S_1$ and $B$. Since $l$ itself is also in $B$, we can use \eqref{eq:SBS_bounds} to write
\begin{equation*}
\begin{aligned}
t 	&\le \frac{1}{m} \sum_{j \in J_l \cup {l}} \alpha_j \rho_j - \frac{\plA}{m}\\
	&\le \frac{1}{m} \left( \sum_{j \in S_1} \varphi \rho_j + \sum_{j \in B} \left(1 + \frac{1}{T(m)}\right) \rho_j \right) - \frac{\plA}{m}\\
	&\le \frac{\varphi}{m} \sum_{j \in [n]} \rho_j - \frac{\plA}{m}\\
	&= \varphi L - \frac{\plA}{m},
\end{aligned}
\end{equation*}
where we additionally used $1 + 1/T(m) \le \varphi$.

For the value of the algorithm we use \eqref{eq:SBS_bounds} again to receive
\begin{equation*}
\begin{aligned}
t + \plA&\le \varphi L - \frac{\plA}{m} + \plA\\
		&\le \varphi L + \left(1 + \frac{1}{T(m)} \right) \left(1 - \frac{1}{m} \right) \rho_l\\
		&\le \left( \varphi + \left(1 + \frac{1}{T(m)}\right) \left(1 - \frac{1}{m} \right) \right) \OPT.\\
\end{aligned}
\end{equation*}
Here we additionally used \eqref{eq:lower_bound_average} and \eqref{eq:lower_bound_max} in the final step.

\emph{Case 3}: $l \in S_2$. The set $J_l$ may now contain jobs of any set. We estimate $t$ as best as possible using \eqref{eq:SBS_bounds}. Since $T(m) \ge \varphi \ge 1 + 1/T(m)$, we have
\begin{equation*}
\begin{aligned}
t 	&\le \frac{1}{m} \sum_{j \in J_l \cup {l}} \alpha_j \rho_j - \frac{\plA}{m}\\
	&\le \frac{T(m)}{m} \sum_{j \in [n]} \rho_j - \frac{\plA}{m} \\
	&= T(m) L - \frac{\plA}{m}.
\end{aligned}
\end{equation*}

To receive the desired competitive ratio, we want to estimate $\plA$. It now becomes apparent why we chose to schedule the $m$ largest jobs w.r.t.\ the minimal running time in the first phase of the algorithm: Since $l$ is in the set $S_2$ (and therefore $S_2$ is not empty), we know that $|S_1| = m$ and these $m$ jobs have a minimal running time not smaller than $l$.

Since the $\tau_j$ are lower bounds for the optimal offline running times $\rho_j$, it follows that $\tau_l \le \rho_j$ for all jobs $j \in S_1$. Including $l$ itself there are at least $m+1$ such jobs. In particular, using the sorting of the optimal offline running times, we have $\tau_l \le \rho_m$ and $\tau_l \le \rho_{m+1}$ for the $m$-th and $(m{+}1)$-th largest job. With equation \eqref{eq:lower_bound_m_m+1}, we receive
\begin{equation*}
\tau_l \le \frac{1}{2} \left(\rho_m + \rho_{m+1}\right) \le \frac{\OPT}{2}.
\end{equation*}

If $\tau_l = u_l$, then it follows directly that $\plA = u_l = \tau_l \le \frac{\OPT}{2}$. If on the other hand $\tau_l = t_l$, then, since $l$ is in $S$, we have $\plA = u_l < T(m) \cdot t_l = T(m) \cdot \tau_l \le \frac{T(m)}{2} \OPT$. Because of $T(m) \ge 1$ it follows in both cases that	
\begin{equation*}
\plA = u_l \le \frac{T(m)}{2} \OPT.
\end{equation*}

The value of the algorithm is then
\begin{equation*}
\begin{aligned}
t + \plA&\le T(m) L - \frac{\plA}{m} + \plA\\
		&\le T(m) \OPT + \left(1 - \frac{1}{m} \right) \frac{T(m)}{2} \OPT\\
		&\le T(m) \left(\frac{3}{2} - \frac{1}{2m} \right) \OPT.
\end{aligned}
\end{equation*}
This concludes the proof of the proposition. \qed
\end{proof}

\subsection{Proof of Theorem \ref{thm:main_alg_uniform}}

\begin{proof}
	As before, let $L$ be the lower bound \eqref{eq:lower_bound_average}. The lower bounds \eqref{eq:lower_bound_max} and \eqref{eq:lower_bound_m_m+1} also hold. We again denote the last job to finish as $l$ and the minimum machine load before $l$ as $t$. Hence, the value of the algorithm is $t + \plA$.
	
	By Proposition \ref{prop:claims_case_dist}, the testing scheme of the algorithm yields $\pjA \le \alpha_j \rho_j$, where
	\begin{equation*}
		\alpha_j := 
		\begin{cases}
			1 + \frac{1}{T_1(m)} 	&\text{if } j \text{ is tested,} \\
			T_1(m)				&\text{else.}
		\end{cases}
	\end{equation*}
	
	We first deal with the case when the number of jobs $n$ is less than or equal to the number of machines $m$. In this case, the algorithm puts at most one job on every machine. Consider job $l$, the last job to finish. By the testing scheme of the algorithm it holds that
	\begin{equation*}
		\ALG = \plA \le \max \left( 1 + \frac{1}{T_1(m)}, T_1(m) \right) \rho_j \le T_1(m) \rho_j \le T_1(m) \OPT,
	\end{equation*}
	where the second inequality holds due to $T_1(m) \ge \varphi$ and the last due to equation \eqref{eq:lower_bound_max}. This concludes the special case where $n \le m$.
	
	Let us now consider $n > m$. We assume \wloss that the job indices are sorted by non-increasing optimal offline running times $\rho_j$, i.e.\ $\rho_1 \ge \dots \ge \rho_n$. Since we now have at least $m+1$ jobs, the lower bound of the $m$-th and $(m{+}1)$-th largest job is applicable.
	
	We bound the value of $t$ by the average of the algorithmic running times of all jobs run before $l$. Let $J_l$ be the set of jobs the algorithm assigns before $l$.
	\begin{equation*}
		\begin{aligned}
			t 	&\le \frac{1}{m} \sum_{j \in J_l} \pjA \\
			&= \frac{1}{m} \sum_{j \in J_l \cup {l}} \pjA - \frac{\plA}{m} \\
			&\le \frac{1}{m} \sum_{j \in [n]} \alpha_j \rho_j - \frac{\plA}{m} \\
		\end{aligned}
	\end{equation*}
	
	\emph{Case 1}: The algorithm tests job $l$. Then, by the non-increasing order of the upper bounds, all jobs in $J_l$ are tested as well. Hence $\alpha_j = 1+1/T_1(m)$ for all $j \in J_l$. Combining this with \eqref{eq:lower_bound_average}, we get	
	\begin{equation*}
		\begin{aligned}
			t 	&\le \frac{1}{m} \sum_{j \in [n]} \left(1 + \frac{1}{T_1(m)}\right) \rho_j - \frac{\plA}{m} \\
				&= \left(1 + \frac{1}{T_1(m)}\right) L - \frac{\plA}{m}.
		\end{aligned}
	\end{equation*}
	
	Finally, since $l$ itself is also tested, we can write
	\begin{equation*}
		\begin{aligned}
			\ALG 	&= t + \plA \\
			&\le \left(1 + \frac{1}{T_1(m)}\right) L - \frac{\plA}{m} + \plA \\
			&\le \left(1 + \frac{1}{T_1(m)}\right) \left(L + \left(1 - \frac{1}{m} \right) \rho_l \right) \\
			&\le \left(1 + \frac{1}{T_1(m)}\right) \left(2 - \frac{1}{m} \right) \OPT. \\
		\end{aligned}
	\end{equation*}
	
	\emph{Case 2}: The algorithm runs $l$ untested. If $l$ is part of the first round of $m$ jobs, that is if $l$ is the only job on its machine, then we can argue analogously to the case $n \le m$ that $\ALG = \plA \le T_1(m) \OPT$.
	
	Otherwise, recall the definition of the minimal running time $\tau_j$ of job~$j$, which is
	\begin{equation*}
		\tau_j = \min(1,u_j)
	\end{equation*}
	in the uniform testing case. As we argued previously, it holds $\rho_j \ge \tau_j$ for all jobs.
		
	Since there are at least $m$ jobs the algorithm considers before $l$, and the algorithm sorts all jobs by $u_j$, we know that there exist at least $m+1$ jobs $j$ with $\tau_j \ge \tau_l$, including $l$ itself. Since the $\tau_j$ are lower bounds for the optimal offline running times $\rho_j$, it follows that $\tau_j \le \rho_l$ for at least $m+1$ different jobs $j$. Using the sorting of the optimal offline running times and \eqref{eq:lower_bound_m_m+1}, we receive
	\begin{equation*}
		\tau_l \le \frac{1}{2} \left(\rho_m + \rho_{m+1}\right) \le \frac{\OPT}{2}.
	\end{equation*}
	
	Since $l$ is not tested by the algorithm we have $\plA = u_l < T_1(m)$. Now, if $\tau_l = u_l$, then it follows directly that $\plA = u_l = \tau_l \le \frac{\OPT}{2}$. If on the other hand $\tau_l = 1$, then $\plA < T_1(m) = T_1(m) \cdot \tau_l \le \frac{T_1(m)}{2} \OPT$. Since $T_1(m) \ge 1$ it follows in both cases that	
	\begin{equation*}
		\plA = u_l \le \frac{T_1(m)}{2} \OPT.
	\end{equation*}
	
	We do not know which jobs before $l$ the algorithm tests or runs untested. From $T_1(m) \ge \varphi$ follows $\alpha_j \le T_1(m)$ in both cases, hence we have $\pjA \le \alpha_j \rho_j \le T_1(m) \rho_j$ for all $j \in [n]$. We write	
	\begin{equation*}
		\begin{aligned}
			t 	&\le \frac{1}{m} \sum_{j \in [n]} T_1(m) \rho_j - \frac{\plA}{m} \\
				&= T_1(m) L - \frac{\plA}{m}.
		\end{aligned}
	\end{equation*}
	
	For the algorithmic value:
	\begin{equation*}
		\begin{aligned}
			\ALG 	&= t + \plA \\
			&\le T_1(m) L - \frac{\plA}{m} + \plA \\
			&\le T_1(m) \OPT + \left(1 - \frac{1}{m} \right) \frac{T_1(m)}{2} \OPT \\
			&\le T_1(m) \left(\frac{3}{2} - \frac{1}{2m} \right) \OPT
		\end{aligned}
	\end{equation*}
	
	Finally, we take the maximum over the two cases above and minimize the value in dependence of $T_1(m)$. In other words we want to minimize
	\begin{equation*}
		\max \left( \left(1 + \frac{1}{T_1(m)}\right) \left(2 - \frac{1}{m} \right),\ T_1(m) \left(\frac{3}{2} - \frac{1}{2m} \right) \right).
	\end{equation*}
	
	The left side of the maximum is decreasing in $T_1(m)$, while the right side is increasing. The minimal maximum is attained when both sides are the same. The definition of $T_1(m)$ balances both sides, which can be easily verified by inserting.
	
	It follows that the ratio of the algorithm can be estimated by
	\begin{equation*}
		\frac{\ALG}{\OPT} \le T_1(m) \left(\frac{3}{2} - \frac{1}{2m} \right).
	\end{equation*} \qed
\end{proof}

\subsection{Proof of Theorem \ref{thm:alg_test_prmptv}}

\begin{proof}
	We assume for now that the algorithm has access to unlimited computational power. Let OFF be any optimal offline algorithm for the makespan minimization problem on $m$ identical machines. As the name suggests, the Two Phases algorithm divides the instance into two phases, which are scheduled one after the other and are both bounded by the value of the optimal solution, thus giving a $2$-competitive result.
	
	In the first phase, our algorithm tests all jobs except trivial jobs with $t_j > u_j$, where running the upper bound is always optimal. Recall that the minimal running time $\tau_j$ of a job $j$ was defined as $\min(t_j,u_j)$. The algorithm then schedules all jobs for their minimal running time using OFF. 
	
	Since the algorithm has access to unlimited computational power, it is clear that OFF is able to return the optimal solution on the instance with input $\tau_j$. At the same time, as we have seen in equation \eqref{eq:running_time_estimation}, any job fulfills $\tau_j \le \rho_j$, hence this optimal value can only be smaller than or equal to the optimum of the original instance. We again denote for any algorithm $A$ the solution given by $A$ on the instance with running times $x_j$ as $A_{(x_j)}$. Then,
	\begin{equation*}
	\text{OFF}_{(\tau_j)} = \text{OPT}_{(\tau_j)} \le \OPT.
	\end{equation*}
	
	All trivial jobs with $t_j > u_j$ are now already completely scheduled. Since all remaining jobs were tested in the first phase, the algorithm now knows all remaining processing times $p_j$. We can therefore use the offline algorithm OFF again to schedule the remaining jobs optimally. To make notation easier, assume $p_j = 0$ for all jobs which are already finished. Hence,
	\begin{equation*}
	\text{OFF}_{(p_j)} = \text{OPT}_{(p_j)} \le \OPT,
	\end{equation*}
	where the final step follows from the fact that $p_j \le \rho_j$ for all jobs.
	
	The algorithm obliviously puts the second schedule on top of the first, such that no job test overlaps its execution. In the worst case the completion time of the entire schedule is the sum of the two sub-schedules, or
	\begin{equation*}
	\ALG \le \text{OFF}_{(\tau_j)} + \text{OFF}_{(p_j)} \le 2 \OPT.
	\end{equation*}
	
	Finally, we consider the case where the algorithm must run in polynomial time. Then we can use the polynomial-time approximation scheme from \cite{HochbaumShmoys1987} in place of our offline algorithm OFF. Let $\eps>0$ be any small positive number. We run the PTAS by \cite{HochbaumShmoys1987} for both phases with a value of $\eps/2$ to receive
	\begin{equation*}
	\ALG \le \text{PTAS}_{(\tau_j)} + \text{PTAS}_{(p_j)} \le (1 + \eps/2) \OPT + (1 + \eps/2) \OPT \le (2 + \eps) \OPT.
	\end{equation*}
	Hence we have an algorithm that runs in polynomial time in the input and is $(2+\eps)$-competitive for any fixed $\eps>0$. \qed
\end{proof}

\end{document}